\newcommand{\leqnomode}{\tagsleft@true\let\veqno\@@leqno}
\newcommand{\reqnomode}{\tagsleft@false\let\veqno\@@eqno}
\newcommand{\abs}[1]{\left\vert{#1}\right\vert}
\newcommand{\bb}[1]{\mathbb{#1}}
\newcommand{\bd}[1]{\mathbf{#1}}
\newcommand{\bld}[1]{\boldsymbol{#1}}
\newcommand{\cl}[1]{\mathcal{#1}}
\newcommand{\sT}{\mathscr{T}}
\newcommand{\TT}{\top}
\newcommand{\jj}{\mathrm{j}}
\newcommand{\dd}{\mathrm{d}}
\newcommand{\orcid}[1]{\href{https://orcid.org/#1}{\textcolor[HTML]{A6CE39}{\aiOrcid}}}
\newtheorem{lemma}{Lemma}
\newtheorem{proposition}{Proposition}
\newtheorem{definition}{Definition}
\definecolor{lime}{HTML}{A6CE39}
\DeclareRobustCommand{\orcidicon}{%
	\begin{tikzpicture}
	\draw[lime, fill=lime] (0,0)
	circle [radius=0.16]
	node[white] {{\fontfamily{qag}\selectfont \tiny ID}};
	\draw[white, fill=white] (-0.0625,0.095)
	circle [radius=0.007];
	\end{tikzpicture}
	\hspace{-2mm}
}
\xdef\csname orcid\x\endcsname{\noexpand\href{https://orcid.org/\csname orcidauthor\x\endcsname}{\noexpand\orcidicon}}
\title{Neuromorphic Sampling of Finite-Rate-of-Innovation Signals}
\title{Neuromorphic Sampling of Sparse Signals}
\author{\IEEEauthorblockN{
Abijith Jagannath Kamath,\orcidA{}~\IEEEmembership{Student Member,~IEEE}, and
Chandra Sekhar Seelamantula,\orcidC{}~\IEEEmembership{Senior Member,~IEEE}}

\thanks{
A.~J.~Kamath and C.~S.~Seelamantula are with the Department of Electrical Engineering, Indian Institute of Science, Bangalore (Email: \{abijithj, css\}@iisc.ac.in).\\
\indent The figures in this paper are in colour in the electronic version.
}}
\begin{document}
\maketitle

%


\begin{abstract}
Neuromorphic sampling is a bioinspired and opportunistic analog-to-digital conversion technique, where the measurements are recorded only when there is a significant change in the signal amplitude. Neuromorphic sampling has paved the way for a new class of vision sensors called \emph{event cameras} or \emph{dynamic vision sensors (DVS)}, which consume low power, accommodate a high-dynamic range, and provide sparse measurements with high temporal resolution making it convenient for downstream inference tasks. In this paper, we consider neuromorphic sensing of signals with a finite rate of innovation (FRI) --- including a stream of Dirac impulses, sum of weighted and time-shifted pulses, and piecewise-polynomial functions. We consider a sampling-theoretic approach and leverage the close connection between neuromorphic sensing and time-based sampling, where the measurements are encoded temporally. Using Fourier-domain analysis, we show that perfect signal reconstruction is possible via parameter estimation using high-resolution spectral estimation methods. We develop a kernel-based sampling approach, which allows for perfect reconstruction with a sample complexity equal to the rate of innovation of the signal. We provide sufficient conditions on the parameters of the neuromorphic encoder for perfect reconstruction. Furthermore, we extend the analysis to multichannel neuromorphic sampling of FRI signals, in the single-input multi-output (SIMO) and multi-input multi-output (MIMO) configurations. We show that the signal parameters can be jointly estimated using multichannel measurements. Experimental results are provided to substantiate the theoretical claims.
\end{abstract}

\begin{IEEEkeywords}
Neuromorphic sampling, opportunistic sampling, finite-rate-of-innovation signals, time-based sampling, multichannel encoding.
\end{IEEEkeywords}




\section{Introduction}
\IEEEPARstart{N}{euromorphic} sampling is a bioinspired continuous-time signal acquisition technique and is different from uniform sampling. In traditional analog-to-digital converters, continuous-time signals are measured at periodic intervals called the sampling interval, which depends only on the class of input signals \cite{shannon1949communication,unser2000sampling}. In neuromorphic sampling, measurements are recorded on the temporal axis only when there is a significant change in the signal, denoting an event, thus, giving rise to a signal-dependent sampling strategy. Thence, neuromorphic sampling is also called \emph{event-driven} sampling. In neuromorphic sampling, measurements, also known as \emph{events}, are 2-tuples comprising the time-instant of the change in the signal, and a one-bit polarity (ON/OFF), which indicates the sign of the change. Sensitivity to only differences of the signal makes the acquisition scheme \emph{opportunistic} \cite{guan2007opportunistic}, {\it i.e.}, in intervals of inactivity in the signal, the acquisition scheme does not record any events.\\
\indent Neuromorphic sampling has opened up avenues for a novel class of audio \cite{chan2007aer} and vision sensors \cite{mahowald1994silicon,delbruck2010sensors,lichtsteiner2008sensor,brandii2014sensor}. Vision sensors or \emph{event cameras}, such as DAVIS346 \cite{gallego2022event}, are 2D spatial arrays of encoders that function on the neuromorphic principle. Event cameras are energy efficient ($\approx 1$ mW), since the sensor array asynchronously measures events without relying on a global clock. They are designed to provide compressed measurements at source and minimize redundancy in the representation. They also capture high-dynamic-range (HDR) signals ($\approx 120$ dB) as the acquisition is sensitive to only finite-differences in the signal. Event cameras have a high temporal resolution ($\approx 1~\mu$s), which makes them perfectly suited for sensing ultrafast changes in the visual scene.\\
\indent The development of neuromorphic sensors is inspired by the functioning of the human vision system \cite{mahowald1994silicon,van2003selective}. An image that is formed on the retina is transmitted to the lateral geniculate nucleus (LGN) through \emph{nasal} and \emph{temporal} connections, and thereafter to the primary visual cortex. The visual information is encoded in the LGN and carried via two types of visual pathways --- parvocellular and magnocellular, where the former is responsible for encoding sustained visual stimulus such as the subject background and peripheral vision, and the latter is responsible for encoding transient or dynamic vision, {\it i.e.}, visual events that are fast-changing. Neuromorphic vision sensors are inspired by the magnocellular sensing mechanism, whereas standard frame-based cameras mimic the parvocellular mechanism.



	

\subsection{Sparse Sampling Meets Sparse Signals}
In this paper, we address the problem of sampling and reconstruction of continuous-time signals of the type:
\begin{equation}\label{eq:signal_model}
	x(t) = \sum_{k=0}^{K-1}a_k\varphi(t-\tau_k), \; 0\leq t< T,
\end{equation}
where $0<\tau_0<\tau_1<\cdots<\tau_{K-1}<T$ are ordered shift parameters, $T\in\bb R$ is the duration of the signal and $\varphi\in L_2(\bb R)$ is a known pulse. Figure~\ref{fig:sampling_illustration} illustrates neuromorphic sampling of $K=17$ pulses. The signal $x(t)$ is completely characterized by the $K$ \emph{coefficients} $\{a_k\in\bb R\}_{k=0}^{K-1}$ and the $K$ \emph{support parameters} $\{\tau_k\in[0,T[\}_{k=0}^{K-1}$. The signal $x$ is said to have a finite rate of innovation (FRI) of $\frac{2K}{T}$ \cite{vetterli2002sampling}. FRI signals are analog counterparts of sparse vectors \cite{do2008union}, and are perfectly represented using $2K$ measurements \cite{blu2008cadzow}, {\it i.e.}, measurements recorded at the rate of innovation of the signal.\\
\indent In this paper, we connect sparse signals to sparse sampling, by using neuromorphic encoders to sample FRI signals. Within this opportunistic sampling framework, measurements are made only when there is a significant activity in the signal. Figure~\ref{fig:sampling_illustration} demonstrates neuromorphic sampling of an FRI signal $x(t)$ using a lowpass filter. The events are \emph{dense} in the intervals where there is a significant change in the signal, and are \emph{sparse} in intervals of inactivity in the signal. On the other hand, if one were to use uniform sampling, the samples are recorded independently of the signal at a fixed rate.\\
\indent Several works have utilized the low sampling requirement of FRI signals for sub-Nyquist signal processing. The key idea is similar to the echolocation principle \cite{lee1992common}. In active sensing, a known pulse $\varphi(t)$ is transmitted, and the received signal is modelled as $x(t)$ (cf.~Eq.~\eqref{eq:signal_model}), which is a linear combination of time-shifted versions of $\varphi(t)$, where the parameters $\{(a_k,\tau_k)\}$ are related to the object that is being sensed. For instance, in the case of pulse-Doppler radar imaging \cite{skolnik1980introduction,richards2014fundamentals,bajwa2011identification,ilan2014radar,rudresh2017radar}, the delays correspond to the distance of each object and the phase of the complex amplitudes are related to the Doppler of each target. In passive sensing modalities, for instance, in radioastronomy \cite{pan2016towards}, the received signal is modelled as an FRI signal as given in Eq.~\eqref{eq:signal_model}. Other applications include ultrasound imaging \cite{tur2011innovation}, Fourier-domain optical coherence tomography \cite{mulleti2014fdoct}, underwater sonar imaging \cite{srinath2020nyquist}, time-of-flight imaging \cite{bhandari2016signal}, and ground penetrating radar \cite{rudresh2020sampling}. Viewing the FRI signal model in Eq.~\eqref{eq:signal_model} in the Fourier domain also finds applications in signal interpolation tasks \cite{pan2014sampling,ongie2015super,mulleti2016effrip,kamath2019curves}.


\subsection{Related Literature}
Neuromorphic sampling is closely related to time-encoding of continuous-time signals \cite{lazar2004sensitivity,gontier2014sampling,adam2022timing}. Time-encoding of FRI signals was first addressed by Alexandru and Dragotti \cite{alexandru2019crossing,alexandru2019reconstructing}, who considered integrate-and-fire time-encoding of a stream of Dirac impulses using an exponential-reproducing kernel. They developed a sequential reconstruction algorithm that recovers one impulse at a time. Their reconstruction strategy requires the support of the sampling kernel to be smaller than the spacing between two consecutive impulses, {\it i.e.}, the sampling kernel becomes signal-dependent, which is a limitation of the method. Satisfying the support constraint also requires sampling beyond the rate of innovation. Hilton {\it et al.} \cite{hilton2021guaranteed} use the sequential reconstruction algorithm on a stream of Dirac impulses filtered using an alpha synaptic function, such that the sampling requirement for perfect reconstruction can be guaranteed by tuning the parameters of the time-encoding machine. We overcame the shortcomings of \cite{alexandru2019reconstructing} by proposing a Fourier-domain reconstruction approach \cite{rudresh2020time}, which allowed for reconstruction of FRI signals observed using arbitrary sampling kernels, with number of measurements of the order of the rate of innovation of the signal. Recently, Naaman {\it et al.} \cite{naaman2022fritem,naaman2021temhardware} addressed reconstruction of FRI signals from integrate-and-fire time-encoding in the presence of noise. Their method differs from that of \cite{rudresh2020time} in the sense that their reconstruction strategy removes terms in the forward linear transformation that cause instability. Kamath {\it et al.} also considered the problem of reconstruction of FRI signals from time-encoded measurements in the presence of noise \cite{kamath2021time}, using the generalized FRI approach \cite{pan2016towards}. Kamath and Seelamantula developed multichannel time-encoding of FRI signals, in the single-input, multi-output (SIMO) and multi-input, multi-output (MIMO) settings \cite{kamath2023multichannel}.\\
\indent In a companion paper, we developed a framework for neuromorphic sampling of continuous-time signals in shift-invariant spaces, and also provided a time-encoding model for neuromorphic sampling \cite{kamath2023neuromorphic}. Using the proposed time-encoding model, we determined the $t$-transform \cite{lazar2003recovery} (cf. Lemma~\ref{lem:ttransform}) that maps events to nonuniform amplitude samples of the signal. Thereafter, signal reconstruction is achieved using an alternating projections technique that enforces the signal prior and measurement consistency. In this paper, we consider neuromorphic sampling of a different signal class --- signals with a finite rate of innovation, thereby connecting sparse signals to sparse sampling.
\begin{figure}[!t]
	\centering
	\includegraphics[width=3.5in]{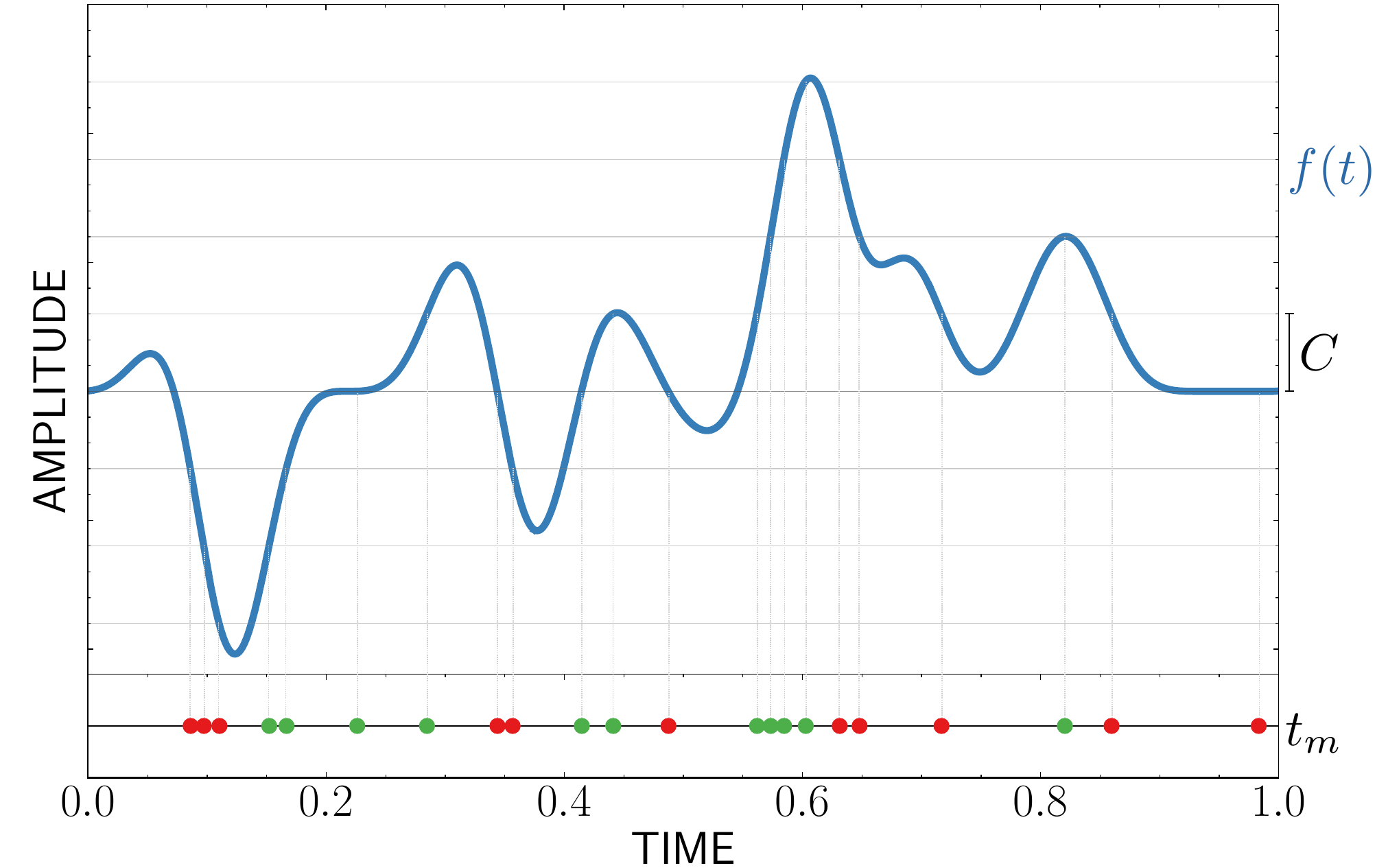}
	\caption{Neuromorphic sampling of FRI signals: a sum of $K=17$ weighted and time-shifted pulses as in Eq.~\eqref{eq:signal_model} is sampled using a neuromorphic encoder using the sum-of-modulated-splines sampling kernel. The event time-instants and the ON (green)/OFF (red) polarities, are sparse, while being sufficient for perfect reconstruction. Further, they occur only in intervals in the neighbourhood of the shift parameters, indicating energy-efficient encoding.}
	\label{fig:sampling_illustration}
\end{figure}


\subsection{Contributions of this Paper}
Neuromorphic sampling is signal-dependent, which is a perfect fit for sparse continuous-time signals that have several intervals of inactivity. FRI signals, such as a stream of Dirac impulses, piecewise polynomials, and signals composed of weighted and time-shifted pulses, naturally belong to this class. We develop a kernel-based acquisition scheme that allows for signal reconstruction using parameter estimation, and show that perfect reconstruction is possible with the number of measurements proportional to the rate of innovation of the signal. We also provide sufficient conditions on the parameters of the neuromorphic encoder to achieve perfect reconstruction (Section~\ref{sec:siso_sampling}). Furthermore, we extend the scheme to SIMO and MIMO configurations, and show that perfect reconstruction is possible using joint estimation techniques, and provide sufficient conditions (Section~\ref{sec:multichannel_sampling}).
Experimental results validate the theoretical claims.
Before proceeding further, we present the mathematical preliminaries (Section~\ref{sec:preliminaries}) required for the developments reported in this paper.



\section{Mathematical Preliminaries}
\label{sec:preliminaries}

\subsection{Neuromorphic Sampling}
In neuromorphic sampling, the signal is encoded as a stream of \emph{events}. Each event contains the time instant ({\it event time-instant}) at which the signal amplitude changes in magnitude by a threshold, along with the polarity of the change ({\it event polarity}). More concretely, a neuromorphic encoder is a mapping from continuous-time signals to a sequence of 2-tuples of the type $(t_m, p_m)\in\bb R\times \{-1,+1\}$, which denotes an event, where $\{t_m\}$ comprises the \emph{event time-instants} at which the signal changes in magnitude by a constant $C>0$ called the \emph{temporal contrast threshold}, and $\{p_m\}$ comprises the \emph{event polarities}: $+1$ for a positive change, and $-1$ for a negative change. The following definition encapsulates the functioning of the neuromorphic encoder \cite{kamath2023neuromorphic}.
\begin{definition}[Neuromorphic encoder] \label{def:neuromorphic_encoder}
A neuromorphic encoder $\sT_C$ is a device that maps a continuous-time signal $f(t)$ to a sequence of 2-tuples $\{(t_m, p_m)\}_{m\in\bb N}$ comprising the event instants $\{t_m\}$ and polarities of change $\{p_m\}$ such that
	\begin{equation}
	\begin{split}
		t_{m} &= \min \left\{t : |f(t) - f(t_{m-1})| = C, \; t>t_{m-1}\right\}, \\
		p_m &= \text{sgn}(f(t_m) - f(t_{m-1})),
	\end{split}
	\end{equation}
	where $C>0$ denotes the temporal contrast threshold of the neuromorphic encoder and $\text{sgn}(\cdot)$ denotes the signum function.
\end{definition}
Figure~\ref{fig:schematic_fri_sampler} shows the schematic of a neuromorphic encoder implemented using a comparator. The comparator accepts the difference between the input continuous-time signal $f(t)$ and its amplitude evaluated at the previous event time-instant $t_{m-1}$ against thresholds $\pm C$. If the difference exceeds $+C$ at time instant $t=t_m$, a positive polarity spike ($p_m=+1$) is produced at the output and denotes an `ON' event. Likewise, if the difference subceeds $-C$ at time instant $t=t_m$, a negative polarity spike ($p_m=-1$) is produced at the output and denotes an `OFF' event. Thus, the output of the comparator is a stream of Dirac impulses:
\[
	p(t) = \sum_{m\in\bb N} p_m\delta(t-t_m).
\]
The encoder output $\{(t_m,p_m)\}_{m\in\bb N}$ can be obtained from $p(t)$ using sub-Nyquist methods in \cite{seelamantula2010sub,vetterli2002sampling} or using a counter as deployed in \cite{brandii2014sensor,naaman2021temhardware}.\\
\indent The temporal measurements are, in general, nonuniformly spaced. One can define the nonuniform counterpart of the sampling interval, called the \emph{sampling density}, as follows:
\begin{equation}\label{eq:sampling_density}
	\mathfrak{D}(\{t_m\}_{m\in\bb N}) \overset{\text{def.}}{=} \sup_{m\in\bb N} |t_{m}-t_{m-1}|.
\end{equation}
A bounded sampling density is crucial for achieving perfect reconstruction of signals in shift-invariant spaces using filtering techniques as shown in \cite{kamath2023neuromorphic}. However, in the context of FRI signals, the perfect reconstruction condition relies on the minimum number of measurements matching the rate of innovation of the signal, and not on the sampling density. The number of measurements obtained over an interval depends on the dynamic range of the signal relative to the temporal contrast threshold $C$.
\begin{lemma}[Minimum number of measurements] \label{lem:min_measurements}
Consider a continuous, continuous-time signal $f(t)$, input to a neuromorphic encoder with temporal contrast threshold $C>0$. Let $[0, T]$ denote the measurement interval, and $\displaystyle f_{\max} = \max_{t\in [0,T]} f(t)$ and $\displaystyle f_{\min} = \min_{t\in [0,T]} f(t)$. To obtain at least $L$ events over $[0,T]$, the temporal contrast threshold must satisfy
\begin{equation}\label{eq:min_measurements}
	0 < C < \frac{f_{\max}-f_{\min}}{L}.
\end{equation}
\begin{proof}
The dynamic range of $f$ is $[f_{\min},f_{\max}]$. The values $f_{\min}$ and $f_{\max}$ exist and are well-defined, since $f$ is continuous and the measurement interval $[0, T]$ is compact. A measurement is recorded every time the signal changes by $C$. The maximum number of partitions of the interval $[f_{\min}, f_{\max}]$, each of width $C$, is $\displaystyle\left\lfloor \frac{f_{\max}-f_{\min}}{C} \right\rfloor$. By virtue of continuity of $f$, and invoking the intermediate-value theorem \cite{rudin1953principles}, we observe that the neuromorphic encoder generates at least one event in each partition. Therefore, at least $L$ events can be obtained by setting the temporal contrast threshold according to Eq.~\eqref{eq:min_measurements}. 
\end{proof}
\end{lemma}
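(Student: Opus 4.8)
The plan is to bound below the number of events produced on $[0,T]$ using only the value range $f_{\max}-f_{\min}$ and the threshold $C$, and then to pick $C$ small enough that this bound is at least $L$.

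First I would record that $f_{\min}$ and $f_{\max}$ are well defined and attained: since $f$ is continuous on the compact interval $[0,T]$, the extreme value theorem gives $t_a,t_b\in[0,T]$ with $f(t_a)=f_{\min}$ and $f(t_b)=f_{\max}$; assume $t_a<t_b$, the other order being symmetric. Next, set $N=\bigl\lfloor (f_{\max}-f_{\min})/C\bigr\rfloor$ and partition the range by the levels $\ell_j=f_{\min}+jC$ for $j=0,\dots,N$. Restricting $f$ to $[t_a,t_b]$, it is continuous with $f(t_a)=\ell_0$ and $f(t_b)\ge\ell_N$, so the intermediate value theorem forces $f$ to attain every level $\ell_j$ on $[t_a,t_b]$; writing $u_j=\inf\{t\in[t_a,t_b]:f(t)\ge\ell_j\}$ yields $t_a=u_0\le u_1\le\cdots\le u_N\le t_b$ with $f(u_j)=\ell_j$.

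The core of the argument is to convert these level crossings into events. Introduce the encoder's running reference $v(t)$, equal to $f(t_m)$ for the most recent event time $t_m<t$ (and to $f(0)$ before the first event); Definition~\ref{def:neuromorphic_encoder} implies $|f(t)-v(t)|\le C$ for all $t$, and that $v$ jumps by exactly $C$ at every event. Since $f(t_a)=f_{\min}$ and $f(t_b)=f_{\max}$, this already gives $v(t_a)\le f_{\min}+C$ and $v(t_b)\ge f_{\max}-C$, so $v$ must make at least $(f_{\max}-f_{\min})/C-2$ jumps between $t_a$ and $t_b$ --- a clean but slightly lossy count. To squeeze out the sharper estimate $N$, I would argue level-by-level that $v$ advances at least once over each sub-interval $[u_{j-1},u_j]$, using $|f-v|\le C$ together with the fact that $f$ changes by $C$ across $[u_{j-1},u_j]$; excursions of $f$ away from monotone behaviour can only create additional events. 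Once the encoder is known to record at least $N$ events on $[0,T]$, the hypothesis $0<C<(f_{\max}-f_{\min})/L$ gives $(f_{\max}-f_{\min})/C>L$, and since $L\in\bb N$ this forces $N=\bigl\lfloor (f_{\max}-f_{\min})/C\bigr\rfloor\ge L$, which is the assertion of the lemma.

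The main obstacle is precisely this last bookkeeping step: the encoder's reference $v(t)$ is in general not one of the partition endpoints $\ell_j$, so ``one event per level'' cannot be read directly off the intermediate value theorem and must instead be extracted from $|f-v|\le C$ while tracking how $v$ lags behind $f$. Because this lag can absorb nearly a full level's worth of range near the endpoints of $[f_{\min},f_{\max}]$ (the extreme case being a monotone sweep that starts just under $C$ away from an extremum), the cleanest way to stay on the safe side is to keep $C$ strictly below $(f_{\max}-f_{\min})/L$, exactly as in the statement; making this offset accounting airtight is the only part of the argument that is not routine.
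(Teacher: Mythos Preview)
Your approach is essentially the paper's: partition the amplitude range $[f_{\min},f_{\max}]$ into width-$C$ sub-intervals, use continuity and the intermediate value theorem to force level crossings, and conclude that at least $\bigl\lfloor(f_{\max}-f_{\min})/C\bigr\rfloor\ge L$ events result. The paper's proof is terser and simply asserts one event per partition without tracking the reference value; you go further by introducing the running reference $v(t)$ with $|f-v|\le C$ and honestly flagging the endpoint-offset bookkeeping as the delicate step, but the underlying strategy is the same.
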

\begin{figure}[!t]
	\centering
	\includegraphics[width=3.5in]{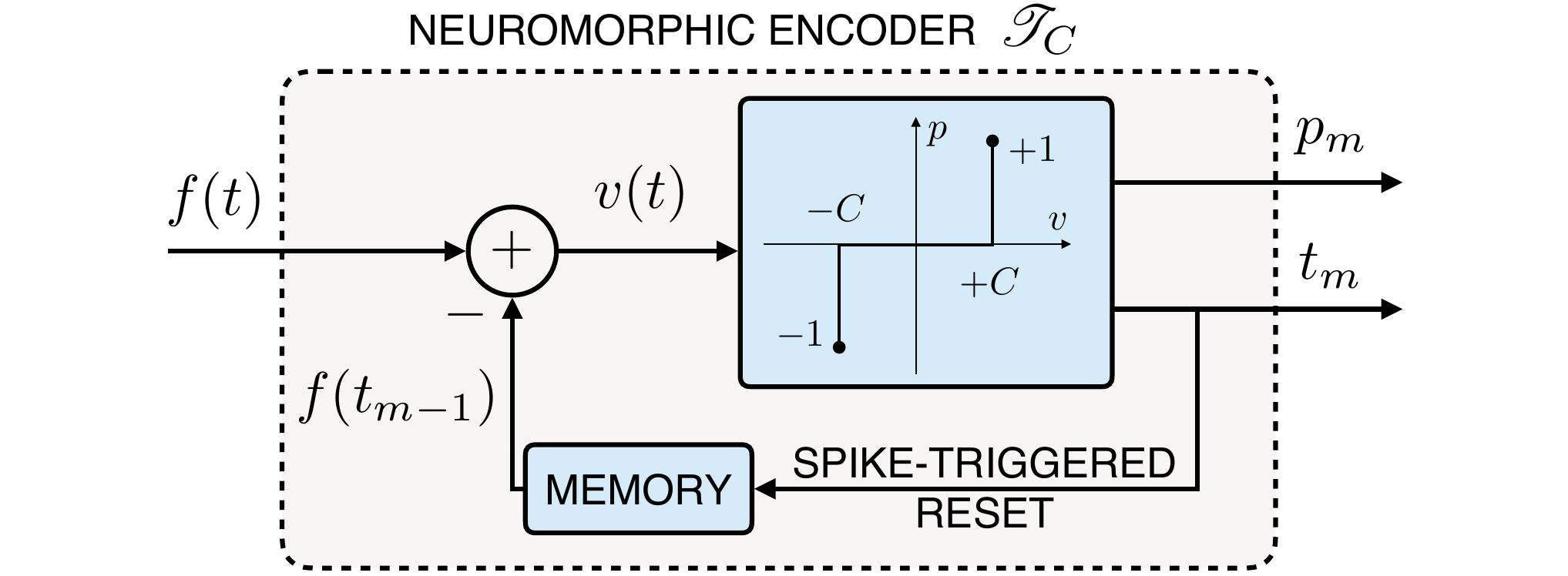}
	\caption{Schematic of a neuromorphic encoder with temporal contrast threshold $C$. The input $f(t)$ is offset by $f(t_{m-1})$ and compared against $\pm C$. The time instant $t_m$ and polarity $p_m$ (together called an event) constitute the output.}
	\label{fig:schematic_fri_sampler}
\end{figure}
\begin{figure*}[!t]
	\centering
	\includegraphics[width=6.75in]{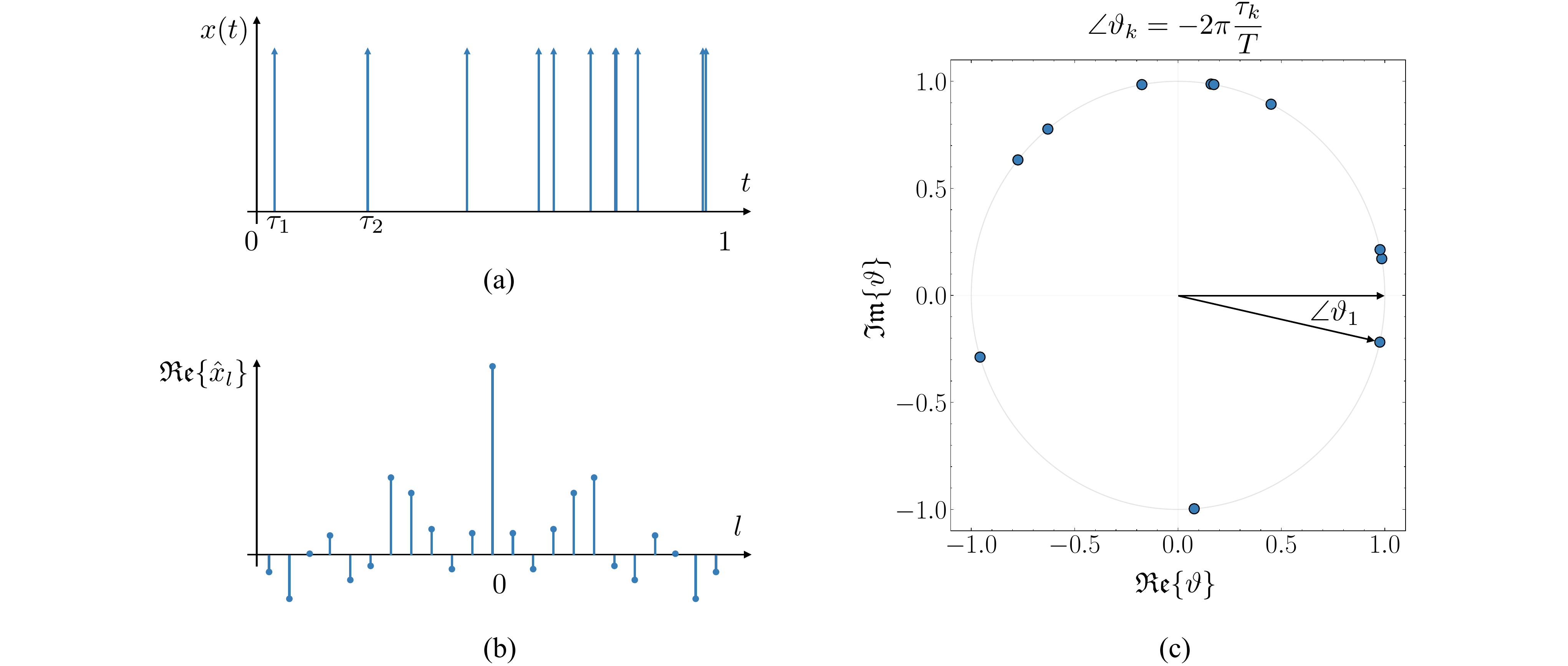}
	\caption[]{Illustration of Prony's method for recovery of support parameters: (a) shows a stream $K=11$ of Dirac impulses $x(t)$ as defined in Eq.~\eqref{eq:sparse_model} with equal amplitudes; (b) shows the real part of $\{\hat{x}_l\}$; and (c) shows the location of the zeros of the annihilating filter $\bld h$ on the complex plane. The Fourier coefficients have a sinusoidal structure whose frequencies are determined by the support parameters of $x(t)$. The locations of the zeros of $\bld h$ are in one-to-one correspondence with the support parameters of $x(t)$.}
	\label{fig:annihilating_filter_illustration}
\end{figure*}
Although neuromorphic sampling does not explicitly record amplitude measurements, it is possible to obtain amplitude samples of the signal from the output of the neuromorphic encoder, akin to the $t$-transform in time-based sampling \cite{lazar2003recovery,gontier2014sampling,kamath2022differentiate}. We recall the neuromorphic $t$-transform next.
\begin{lemma}[Neuromorphic $t$-transform \cite{kamath2023neuromorphic}] \label{lem:ttransform}
Consider a continuous, continuous-time signal $f(t)$ and let $\sT_C\{f\} = \{(t_m, p_m)\}_{m\in\bb N}$ denote the sequence of events generated by the neuromorphic encoder with temporal contrast threshold $C>0$. The samples of the signal at event time-instants $\{t_m\}_{m\in\bb N}$ are given by
\begin{equation}\label{eq:lemma2}
	f(t_m) = f(t_0) + C\sum_{i=1}^m p_i, \;\forall m\in\bb N,
\end{equation}
where $t_0<t_1$ is the reference time-instant with respect to which the events are measured, and $f(t_0)$ is the corresponding initial value.
\end{lemma}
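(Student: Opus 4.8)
The plan is to exploit the defining recursion of the neuromorphic encoder (Definition~\ref{def:neuromorphic_encoder}) directly: convert each event polarity into a signed amplitude increment of magnitude $C$, and then telescope over the events. First, I would observe that at any event instant $t_m$ with $m\geq 1$, the encoder fires exactly when the change relative to the previous event value reaches the threshold in magnitude, i.e.\ $|f(t_m)-f(t_{m-1})| = C$. Combined with $p_m = \text{sgn}\bigl(f(t_m)-f(t_{m-1})\bigr)$, this fixes the \emph{signed} increment without ambiguity: $f(t_m)-f(t_{m-1}) = p_m C$ for every $m\geq 1$. A minor point worth recording here is that continuity of $f$ --- via the intermediate-value theorem, exactly as in the proof of Lemma~\ref{lem:min_measurements} --- guarantees that whenever $\{t>t_{m-1}:|f(t)-f(t_{m-1})|=C\}$ is nonempty it attains a minimum, so the event instants (and hence this per-step identity) are well-defined and no intermediate threshold crossing is skipped.

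Next, I would induct on $m$, which is just a telescoping of the increments: $f(t_m)-f(t_0) = \sum_{i=1}^{m}\bigl(f(t_i)-f(t_{i-1})\bigr) = C\sum_{i=1}^{m} p_i$, and rearranging gives Eq.~\eqref{eq:lemma2}. The base case $m=0$ is the vacuous statement $f(t_0)=f(t_0)$, and the inductive step merely adds the one-step identity above to the previous partial sum.

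I do not anticipate a genuine obstacle --- the heart of the argument is a telescoping sum --- but the one place that needs care is establishing the per-step identity $f(t_m)-f(t_{m-1}) = p_m C$ with the sign pinned down (rather than an unspecified $\pm C$), which rests on the $\min$ in Definition~\ref{def:neuromorphic_encoder} together with continuity of $f$. I would also remark that the reference pair $(t_0,f(t_0))$ is external side information --- an initial condition, consistent with the hypothesis $t_0<t_1$ --- so the identity recovers amplitude samples at every event instant from the polarities alone once $f(t_0)$ is supplied, giving an accumulator-style readout of the signal.
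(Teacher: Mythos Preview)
Your argument is correct: the per-step identity $f(t_m)-f(t_{m-1})=p_mC$ follows immediately from Definition~\ref{def:neuromorphic_encoder}, and telescoping gives Eq.~\eqref{eq:lemma2}. Note that the paper does not actually supply a proof of this lemma --- it is only stated, with a citation to \cite{kamath2023neuromorphic} --- so there is no in-paper proof to compare against; your derivation is the natural one and is what the cited reference presumably contains.
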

Effectively, given an initial value, the $t$-transform provides a method to compute the amplitude samples from the events.


\subsection{Support Recovery Using Prony's Method}
\label{subsec:prony}
Consider the prototypical $K$-sparse, compactly-supported FRI signal composed of Dirac impulses:
\begin{equation}\label{eq:sparse_model}
	x(t) = \sum_{k=0}^{K-1} a_k\delta(t-\tau_k), \; 0\leq t\leq T,
\end{equation}
where $\{a_k\}_{k=0}^{K-1}$ are complex-valued coefficients and $0<\tau_0<\tau_1<\cdots<\tau_{K-1}<T$ are the support parameters, which completely characterize the signal. Recovery of the support parameters is a nonlinear problem, and is better understood in the Fourier domain. The support recovery problem is effectively the Fourier dual of the parametric spectral estimation problem \cite{stoica2000spectral}, because, streams of nonuniform Dirac impulses and a sum of complex exponentials are Fourier pairs. The Fourier duality is illustrated in Figure~\ref{fig:annihilating_filter_illustration}. The support parameters can be recovered using high-resolution spectral estimation (HRSE) techniques \cite{stoica2000spectral} such as Prony's annihilating filter method \cite{prony}, multiple signal classification (MUSIC) \cite{schmidt1986multiple}, and estimation of signal parameters via rotational invariant technique (ESPRIT) \cite{roy1989esprit}. The simplest of these techniques is Prony's method, which we recall next.\\
\indent Consider a periodized version of $x(t)$ with time-period $T$. Invoking the Fourier-series representation, we can write
\begin{equation}\label{eq:swce_dirac}
	x(t) = \sum_{l\in\bb Z} \underbrace{\frac{1}{T}\sum_{k=0}^{K-1} a_{k} e^{-\jj \omega_0 l \tau_{k}}}_{\hat{x}_l} e^{\jj \omega_0 l t}, \; t\in[0,T],
\end{equation}
where $\displaystyle \omega_0 = \frac{2\pi}{T}$, and $\hat{\boldsymbol{x}} = \{\hat{x}_l\}_{l\in\bb Z}$ are the Fourier coefficients.
The Fourier coefficients are in the form of a sum of weighted complex exponentials. Prony's annihilating filter method \cite{prony} requires at least $2K+1$ contiguous Fourier coefficients and employs a $(K+1)$-tap annihilating filter $\boldsymbol{h} = \{h_k\}_{k=0}^K\in\bb C^{K+1}$ with zeros $\{\vartheta_k = e^{-\jj 2\pi \tau_k/T}\}_{k=0}^{K-1}$, which are in one-to-one correspondence with the support parameters $\{\tau_{k}\}_{k=0}^{K-1}$. The $z$-transform of the annihilating filter is given by
\begin{equation*}
	H(z) = \sum_{k=0}^{K} h_{k} z^{-k} = h_{0}\prod_{k=0}^{K-1}(1-\vartheta_{k} z^{-1}).
\end{equation*}
The {\it annihilation property} can be verified as follows:
\begin{equation}
\begin{split}
	(\boldsymbol{h}*\hat{\boldsymbol{x}})_l &= \sum_{k\in\bb Z} h_{k} \hat{x}_{l-k} = \sum_{k=0}^{K} h_{k}  \frac{1}{T}\sum_{k'=0}^{K-1} c_{k'} \vartheta_{k'}^{(l-k)}, \\
	&= \frac{1}{T} \sum_{k'=0}^{K-1} c_{k'} \vartheta_{k'}^{l}  \underbrace{\sum_{k=0}^{K} h_{k} \vartheta_{k'}^{-k}}_{H(\vartheta_{k'})=0} = 0, \; \forall l\in\bb Z.
\end{split}
\label{eq:annihilation}
\end{equation}
Let $\hat{\bd x} = [\hat{x}_{-M}\; \hat{x}_{-M+1} \; \cdots \; \hat{x}_{M-1}\; \hat{x}_M]^\TT \in \bb C^{2M+1}, \; M\geq K$ denote the $(2M+1)$ Fourier coefficients. Then, $\hat{\bd x}$ can be embedded into a Toeplitz matrix in $\bb C^{(M+1)\times (M+1)}$, using the Toeplitzification operator $\Gamma_M$ \cite{simeoni2020cpgd} constructed as follows:
\begin{align*}
	&\Gamma_M: \bb C^{2M+1} \rightarrow \bb C^{(M+1)\times (M+1)}, \\
	&\hat{\bd x} \mapsto (\Gamma_M\hat{\bd x})_{i,j} = \hat{x}_{i-j},\; i,j=1,\cdots,M+1.
\end{align*}
Considering $l \in \llbracket 0, M \rrbracket \overset{\text{def.}}{=} \{0,1,2,\ldots,M\}$ from Eq.~\eqref{eq:annihilation} gives the linear system of equations $(\Gamma_M\hat{\bd x})\bd h = \bld 0$, {\it i.e.}, the annihilating filter is a nontrivial vector in the nullspace of $\Gamma_M\hat{\bd x}$ and can be found using the Eckart-Young theorem \cite{horn2012matrix}, which selects the right eigenvector corresponding to the smallest singular value of the matrix $\Gamma_M\hat{\bd x}$. The locations $\{\tau_{k}\}_{k=0}^{K-1}$ can be determined from the roots $\{\vartheta_k\}_{k=0}^{K-1}$ as $\tau_k = -\frac{T}{2\pi}\angle{\vartheta_k}$, where $\angle$ denotes the principal angle of the complex argument.\\
\indent Robust techniques have been proposed for parameter recovery of FRI signals in the presence of noise. The Cadzow denoising algorithm \cite{cadzow,blu2008cadzow} is used to denoise the Fourier coefficients in the presence of measurement noise. Condat and Hiribayashi \cite{condat2015cadzow} proposed a convex optimization problem based on the \emph{lifting technique} to estimate the support parameters. Recovery of Fourier coefficients from the signal measurements can be posed as an $\ell_2$-minimization problem with a suitable regularizer to satisfy the annihilation condition. Do{\v{g}}an {\it et al.} \cite{dougan2015reconstruction} and Pan {\it et al.} \cite{pan2016towards} solved the problem using alternating minimization and Simeoni {\it et al.} \cite{simeoni2020cpgd} solved the problem using an approximate version of the proximal gradient method.

\subsection{Coefficient Recovery}
Once the support parameters are recovered, the coefficients $\bd a = [a_0\;a_1\cdots a_{K-1}]^\TT$ can be estimated using $\hat{\bd x}$ and the estimated support parameters $\bld \tau$ using linear least-squares regression on $\bd a$, {\it i.e.}, by solving for $\bd a$ in the following:
\begin{equation}\label{eq:amplitude_estimation}
	\hat{\bd x}\!=\!\underbrace{\begin{bmatrix}
	e^{-\mathrm{j}M\omega_0 \tau_{0}} & \hspace{-0.3cm}e^{-\mathrm{j}M\omega_0 \tau_{1}} & \hspace{-0.3cm}\cdots & \hspace{-0.3cm}e^{-\mathrm{j}M\omega_0 \tau_{K\!-\!1}}\\
    e^{-\mathrm{j}(M\!-\!1)\omega_0 \tau_{0}} & \hspace{-0.3cm} e^{-\mathrm{j}(M\!-\!1)\omega_0 \tau_{1}} & \hspace{-0.3cm}\cdots & \hspace{-0.3cm}e^{-\mathrm{j}(M\!-\!1)\omega_0 \tau_{K\!-\!1}}\\
	\vdots & \hspace{-0.3cm}\vdots & \hspace{-0.3cm}\ddots & \hspace{-0.3cm}\vdots\\
    e^{\mathrm{j}M\omega_0 \tau_{0}} & \hspace{-0.3cm} e^{\mathrm{j}M\omega_0 \tau_{1}} & \hspace{-0.3cm}\cdots & \hspace{-0.3cm}e^{\mathrm{j}M\omega_0 \tau_{K\!-\!1}}
	\end{bmatrix}}_{\bd V}\bd a.
\end{equation}
Since $\bd V\in \bb C^{(2M+1)\times K}$ has a Vandermonde structure, it is left-invertible \cite{horn2012matrix}, and the estimated coefficient vector $\bd a$ is unique.


\section{Kernel-based Neuromorphic Sampling}
\label{sec:siso_sampling}
In this section, we propose kernel-based neuromorphic sampling akin to kernel-based uniform sampling \cite{vetterli2002sampling,mulleti2017paley,haro2018sampling}, and develop a perfect reconstruction strategy for FRI signals from events generated using a neuromorphic encoder. A schematic of kernel-based neuromorphic sampling is shown in Figure~\ref{fig:kernel_based_sampling}.


\subsection{Reconstruction of Pulse Streams}
\label{subsec:pulse_stream_reconstruction}
Consider a $K$-sparse, compactly-supported signal $x(t)\in L_2([0,T])$ constructed using a prototype pulse $\varphi(t)\in L_2(\bb R)$, defined as
\begin{equation}\label{eq:pulse_signal_model}
	x(t) = \sum_{k=0}^{K-1} a_k\varphi(t-\tau_k),
\end{equation}
where $\bd a = [a_0\;a_1\cdots a_{K-1}]^\TT \in \bb R^K$ is the vector of unknown coefficients, $\bld \tau = [\tau_0\;\tau_1\;\cdots\;\tau_{K-1}]^\TT\in\bb R^K$ is the vector of unknown locations with $0\leq \tau_{0}<\cdots<\tau_{K-1}< T$. The signal $x(t)$ has a rate of innovation equal to $\displaystyle\frac{2K}{T}$. Signal reconstruction via parameter estimation requires two steps --- estimation of the support parameters using Prony's method, followed by estimation of the coefficients using least-squares regression. Since $x(t) \in L_2([0,T])$, it admits a Fourier series representation with Fourier coefficients given by
\begin{equation}\label{eq:swce_pulse}
\hat{x}_{l} = \frac{1}{T}\hat{\varphi}(l\omega_{0})\sum_{k=0}^{K-1} a_{k} e^{-\jj l\omega_{0}\tau_{k}}, \; l\in\bb Z,
\end{equation}
where $\hat{\varphi}$ denotes the Fourier transform of the pulse $\varphi$ and $\omega_0 = \frac{2\pi}{T}$. The Fourier coefficients possess the sum-of-weighted-complex-exponential (SWCE) structure, akin to Eq.~\eqref{eq:swce_dirac}. The support parameters can be estimated with a minimum of $2K+1$ Fourier coefficients using Prony's method (cf. Section~\ref{subsec:prony}).\\
\indent We rely on the kernel-based sampling strategy proposed in \cite{mulleti2017paley} to obtain the Fourier coefficients of the signal, {\it i.e.}, the signal is observed using a neuromorphic encoder after filtering through a sampling kernel $g(t)$ as shown in Figure~\ref{fig:kernel_based_sampling}. The sampling kernel is typically a lowpass filter that introduces additional smoothness in the signal that is input to the neuromorphic encoder $\sT_C$. The filtered signal is given by
\begin{equation} \label{eq:filteredSignalDerivation}
\begin{split}
	f(t) &= (x*g)(t)
	= \int_{\bb R} g(\nu) x(t-\nu) \,\dd \nu, \\
	&= \int_{\bb R} g(\nu) \sum_{l\in\bb Z} \hat{x}_l e^{\jj \omega_{0}l (t-\nu)}\, \dd \nu, \\
	&= \sum_{l\in\bb Z} \hat{x}_l \left( \int_{\bb R}g(\nu) e^{-\jj l\omega_{0} \nu}\, \dd \nu\right) e^{\jj \omega_{0}l t}, \\
	&= \sum_{l\in\bb Z} \hat{x}_l \hat{g}(l\omega_{0})e^{\jj \omega_{0}l t},
\end{split}
\end{equation}
where $\hat{g}(\omega)$ denotes the Fourier transform of $g(t)$. We see that the Fourier representation also extends to the filtered signal $f(t)$ with the coefficients given by $\hat{x}_l\,\hat{g}(l\omega_0)$. The advantage of the kernel-based sampling approach is that the kernel can be chosen to satisfy certain conditions, for instance, the Fourier-domain alias-cancellation conditions \cite{mulleti2017paley}:
\begin{equation} \label{eq:alias_cancellation}
\hat{g}(l\omega_{0}) = \begin{cases}
g_{l} \neq 0, & l \in \llbracket -K, K \rrbracket, \\
0, & l \notin \llbracket -K, K \rrbracket. \\
\end{cases}
\end{equation}
\begin{figure}[!t]
	\centering
	\includegraphics[width=3.5in]{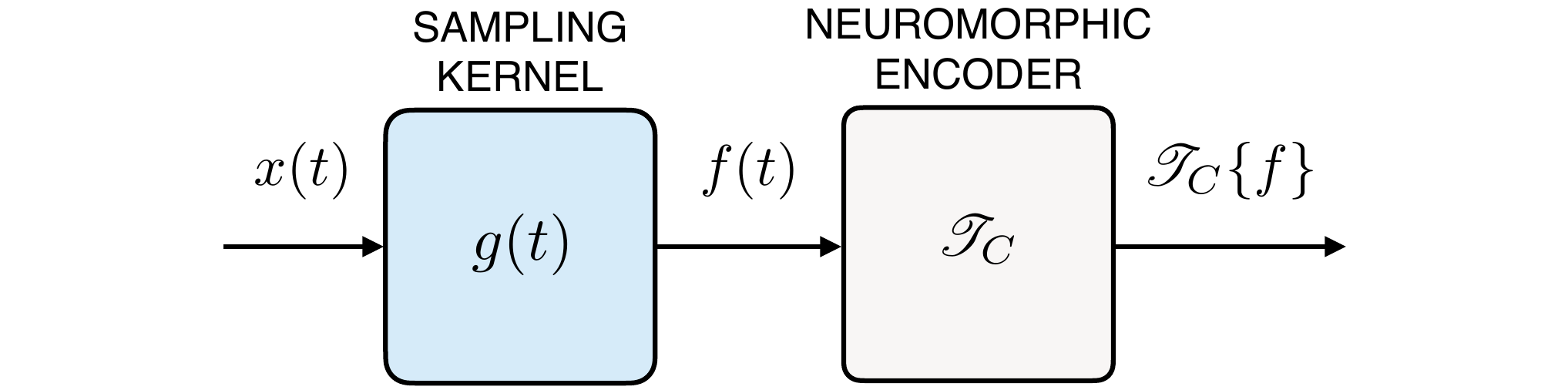}
	\caption{Schematic of kernel-based neuromorphic sampling of an FRI signal $x(t)$ using a sampling kernel $g(t)$ that satisfies the Fourier-domain alias-cancellation conditions in Eq.~\eqref{eq:alias_cancellation}.}
	\label{fig:kernel_based_sampling}
\end{figure}
\begin{figure*}[!t]
	\centering
	\subfigure[]{\includegraphics[width=2.1in]{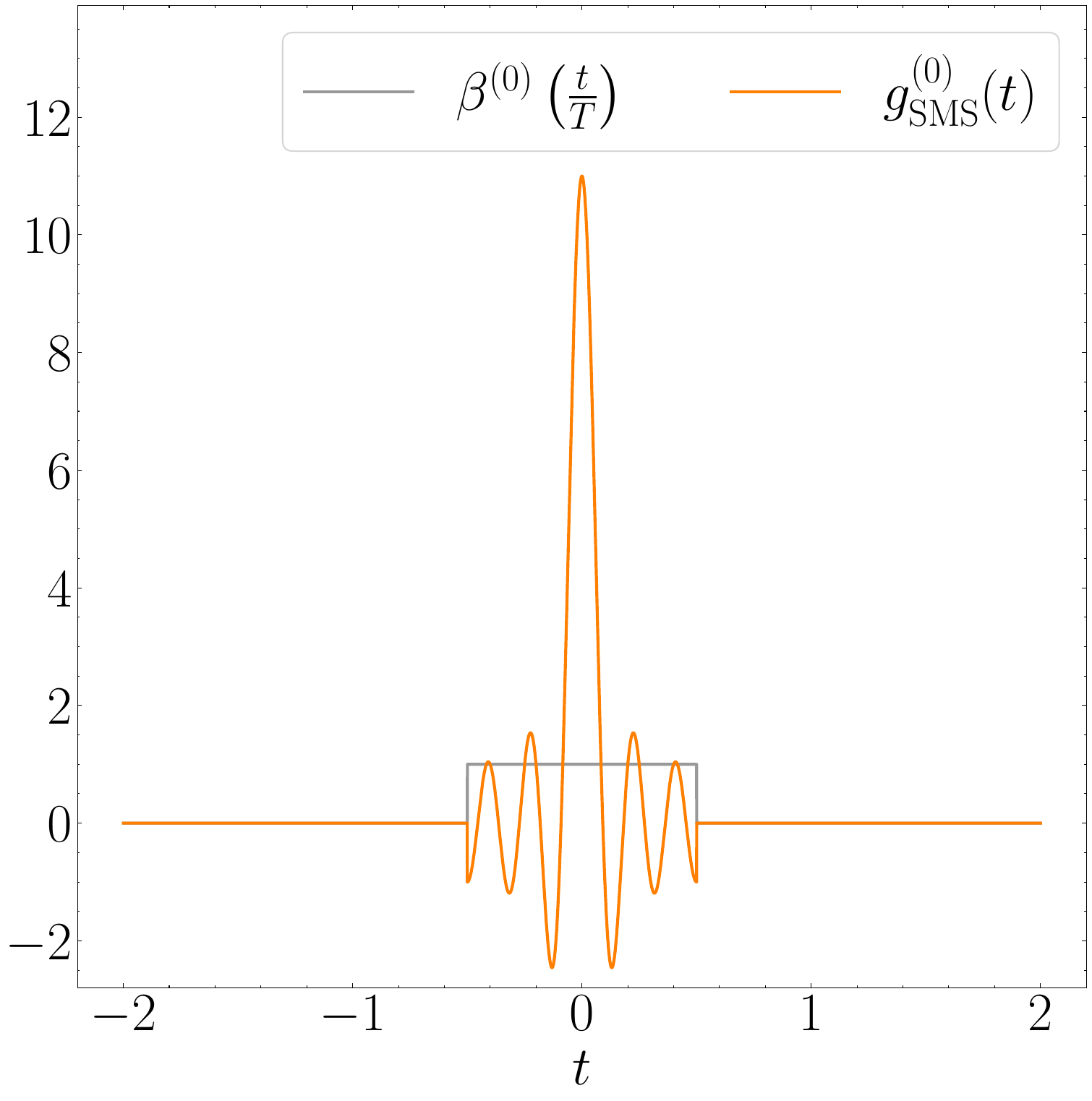}}
	\subfigure[]{\includegraphics[width=2.1in]{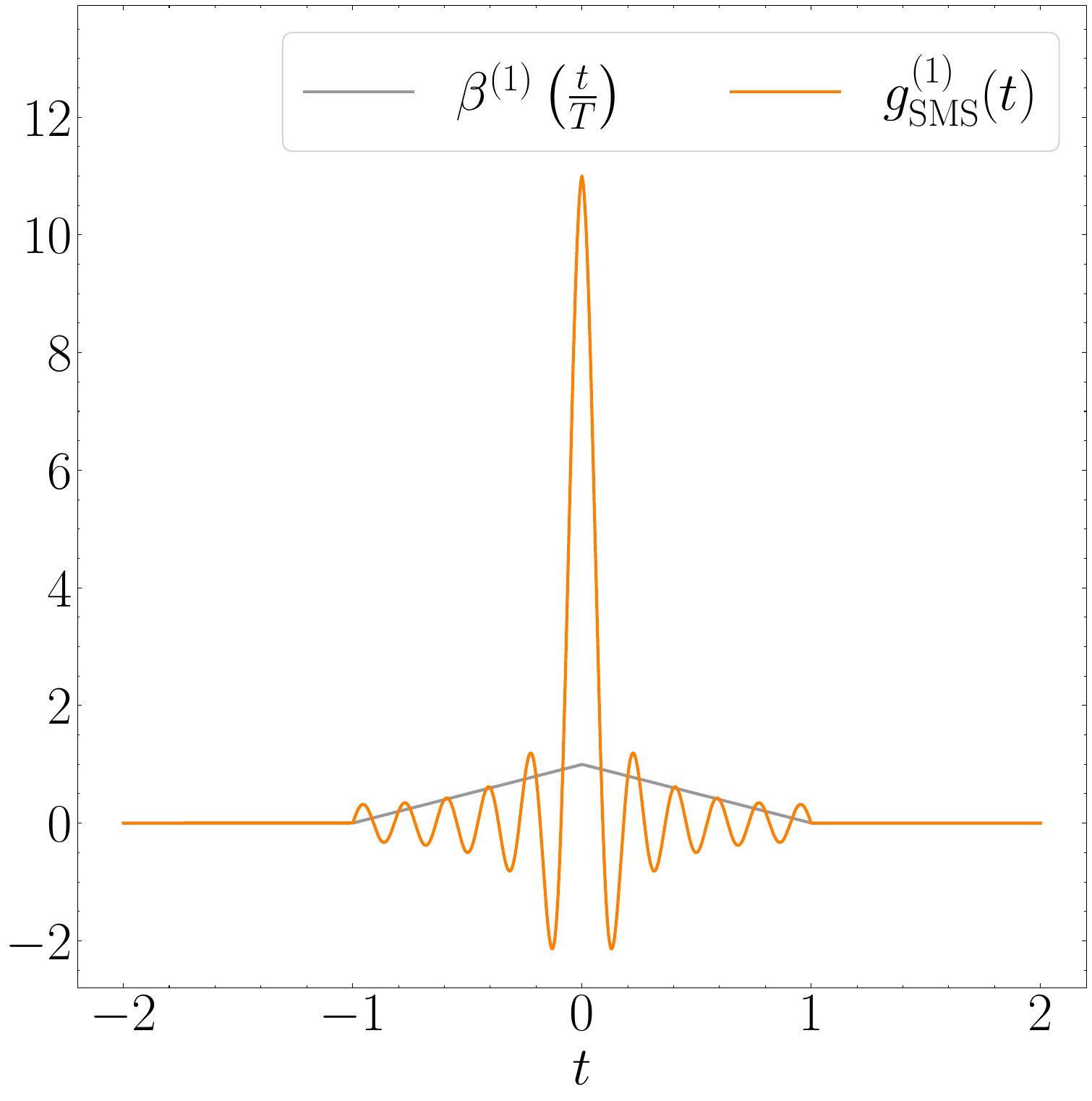}}
	\subfigure[]{\includegraphics[width=2.1in]{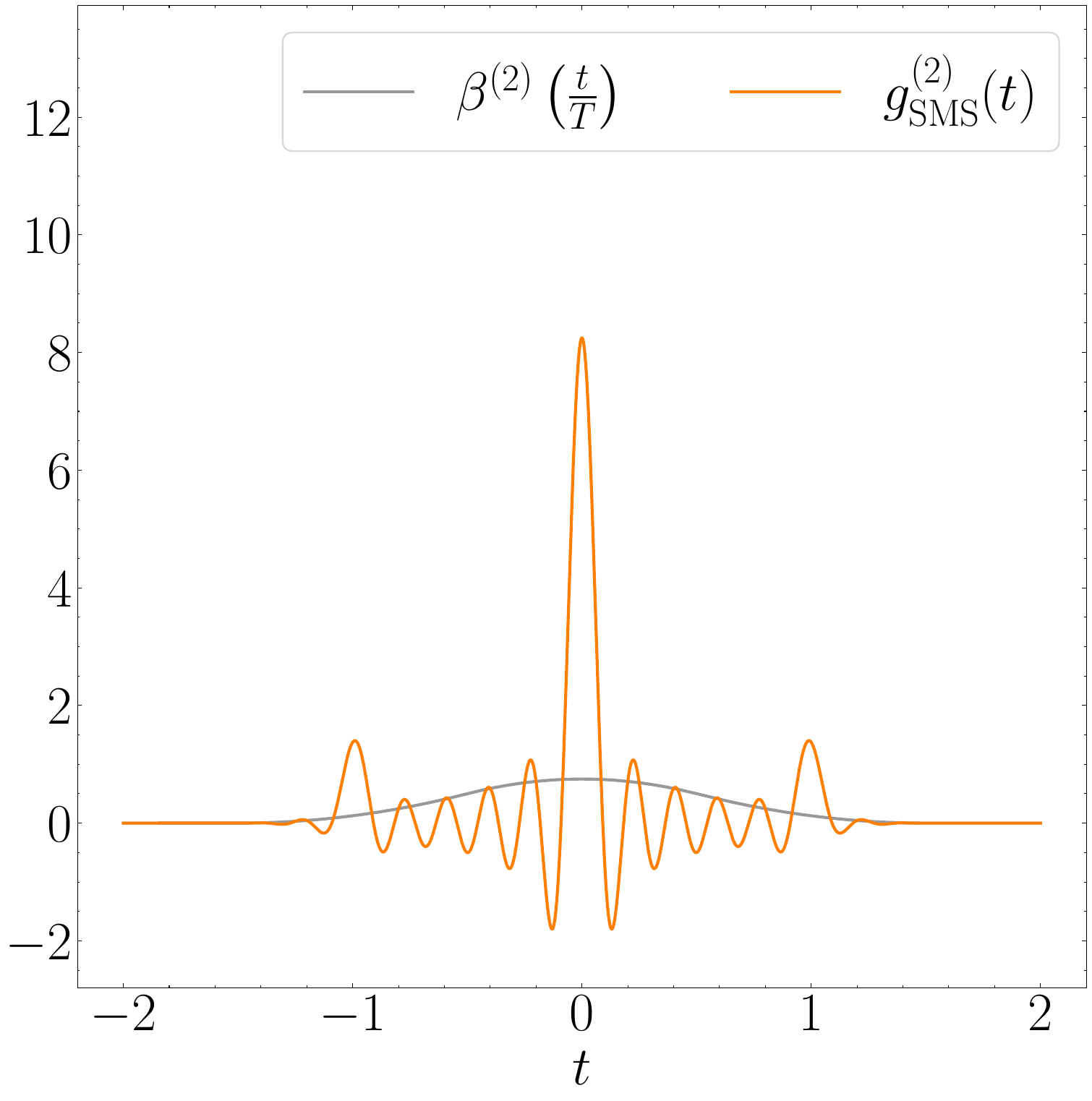}}
	\subfigure[]{\includegraphics[width=2.1in]{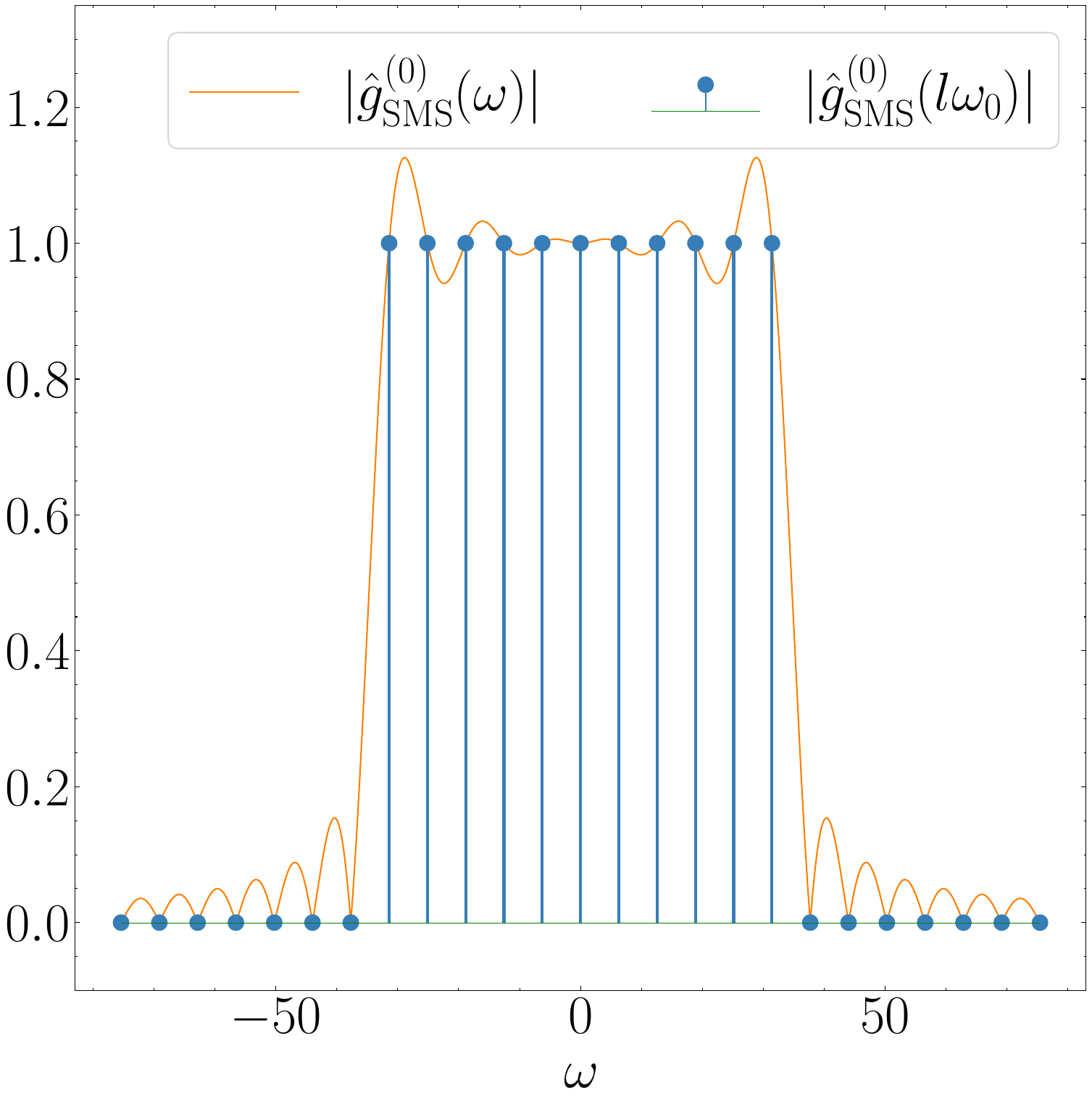}}
	\subfigure[]{\includegraphics[width=2.1in]{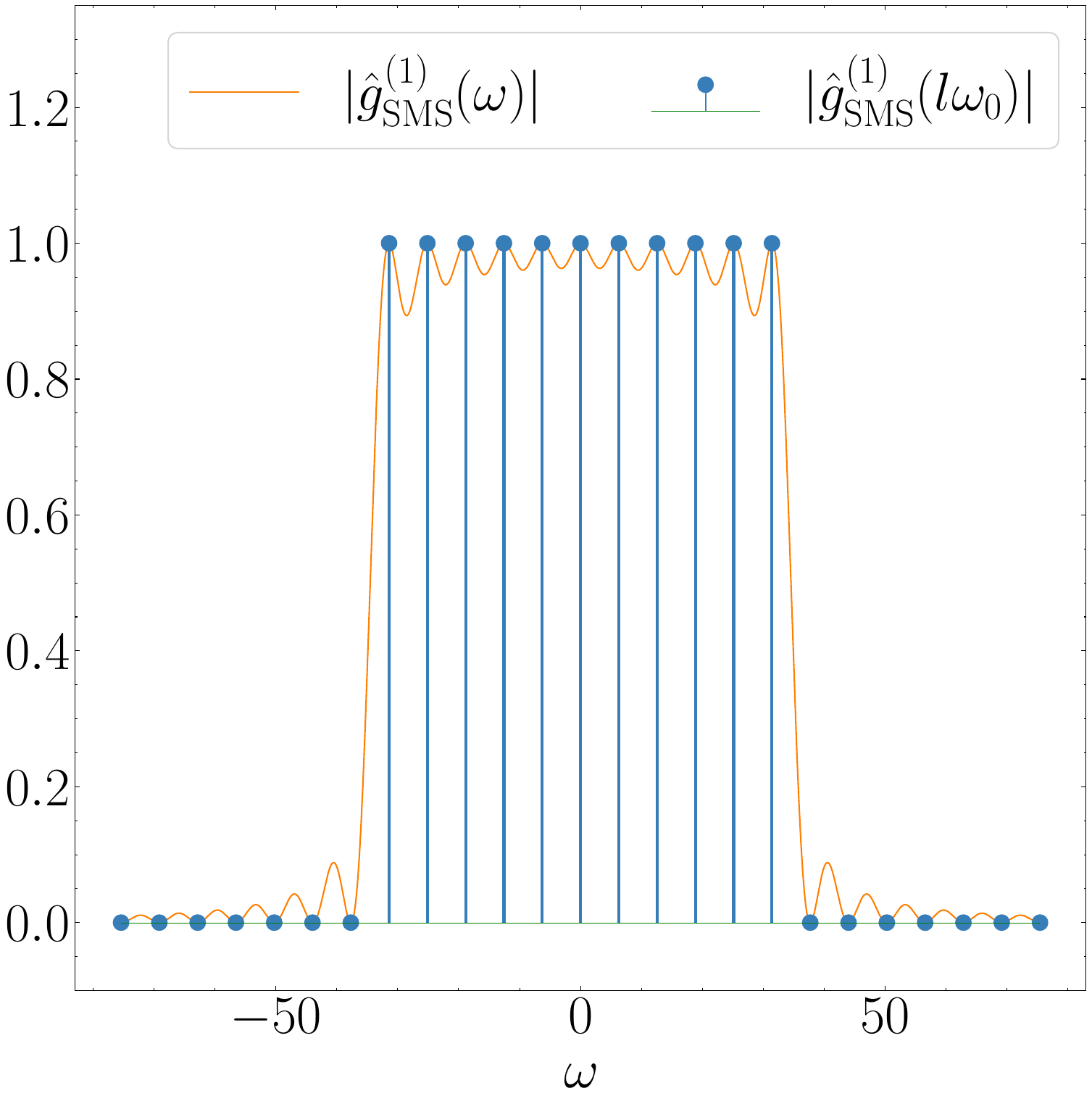}}
	\subfigure[]{\includegraphics[width=2.1in]{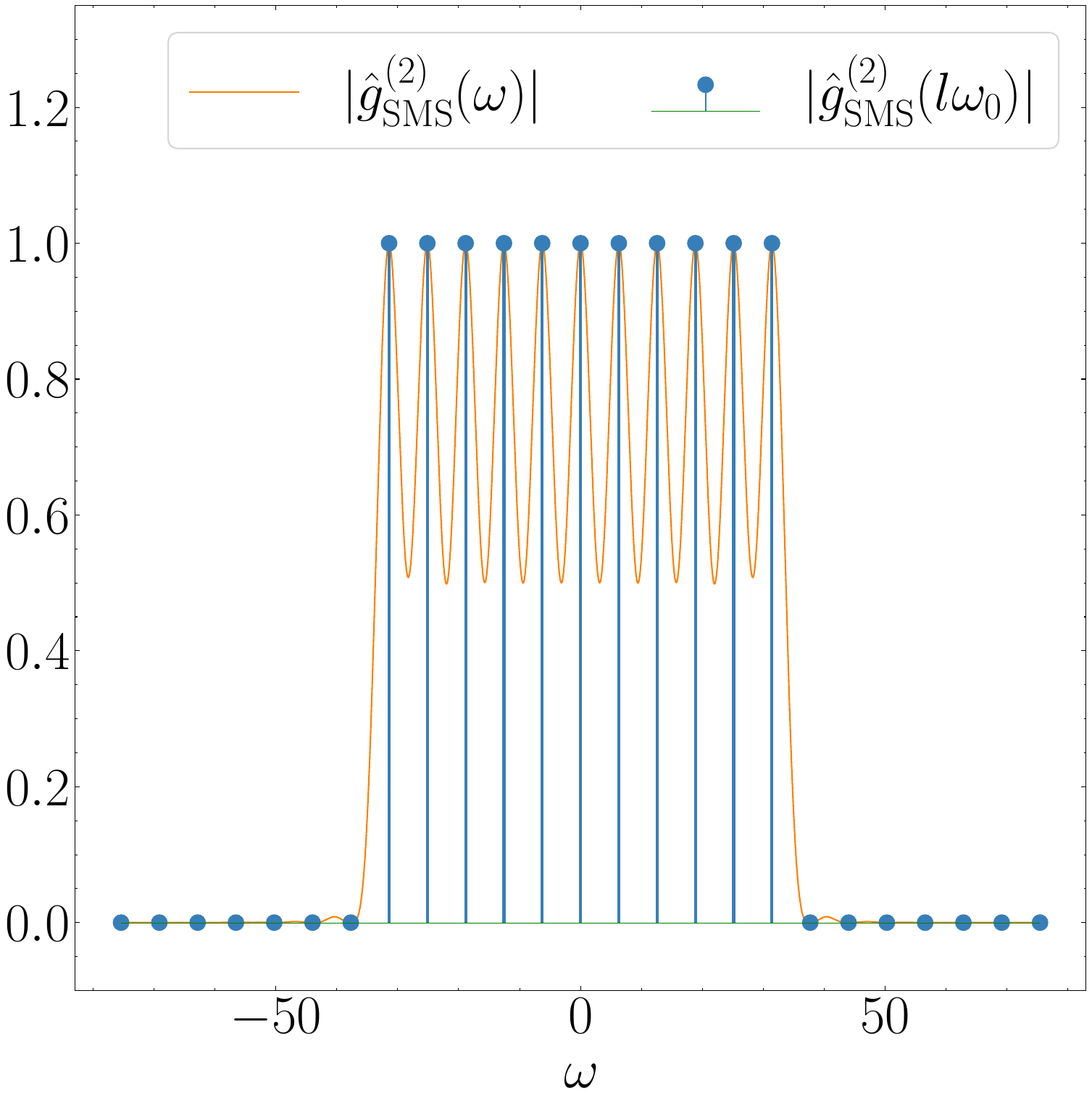}}
	\caption{The sum-of-modulated-splines (SMS) sampling kernels for $K=5$, and $T=1$: (a)-(c) depict the time-domain representations; and (d)-(f) depict the Fourier-domain representations, for $r=0,1,2$. The time-domain representations show that the SMS kernels are compactly supported. The Fourier-domain representations show that the kernels satisfy the Fourier-domain alias-cancellation conditions given in Eq.~\eqref{eq:alias_cancellation} with $\hat{g}^{(r)}_\text{SMS}(l\omega_0)=1, \forall l\in\llbracket -K,K\rrbracket$.}
	\label{fig:sampling_kernels}
\end{figure*}
Without loss of generality, setting $g_l=1$, gives
\begin{equation} \label{eq:trig_poly}
	f(t) = \sum_{l=-K}^K \hat{x}_l e^{\jj \omega_{0}l t},
\end{equation}
which is a finite sum of complex exponentials and has the desired continuity property for neuromorphic encoding. Examples of kernels that satisfy the Fourier-domain alias-cancellation conditions include the sinc function of bandwidth $\displaystyle\frac{2K+1}{T}$; the sum-of-sincs (SoS) kernel in the Fourier domain \cite{tur2011innovation}; the sum-of-modulated-spline (SMS) kernels in the time domain \cite{mulleti2017paley}; and exponential-reproducing and polynomial-reproducing kernels, which satisfy the {\it generalized Strang-Fix} conditions \cite{dragotti2007fix}. We use the SMS family of kernels \cite{mulleti2017paley}, which have the time-domain representation:
\begin{equation}\label{eq:sms_sampling_kernel}
	g^{(r)}_{\text{SMS}}(t) = \frac{1}{T} \sum_{k=-K}^K \beta^{(r)}\left(\frac{t}{T}\right)e^{\jj k\omega_0 t},
\end{equation}
where $\beta^{(r)}$ denotes the centered polynomial B-spline of degree $r$ \cite{unser1999splines}. In particular, for $r=0$, the SMS and the SoS kernels coincide, {\it i.e.}, $g_\text{SoS} = g_\text{SMS}^{(0)}$. SMS kernels are compactly supported, which makes them better suited for practical applications. Figure~\ref{fig:sampling_kernels} shows the SMS kernels, of the zeroth, first, and second order, in the time and frequency domains.\\
\indent Suppose there are $L$ events recorded in the interval $[0,T]$ comprising event instants $\{t_1, t_2, \cdots, t_L\}$, and corresponding polarities $\{p_1, p_2, \cdots, p_L\}$. The corresponding signal amplitudes $\bd f = [f(t_1)\; f(t_2) \cdots f(t_L)]^\TT$ follow Lemma~\ref{lem:ttransform}. From the preceding Fourier analysis, we can set up a linear system of equations relating the $2K+1$ Fourier coefficients and the signal amplitudes as $\bd f = \bd G\hat{\bd x}$, where the matrix $\bd G\in\bb C^{L\times (2K+1)}$ is given as follows:
\begin{equation} \label{eq:ctemMatrix}
	\bd G = \begin{bmatrix}
    e^{-\mathrm{j}K\omega_0 t_{1}}  & \cdots & 1  &\cdots &e^{\mathrm{j}K\omega_0 t_{1}}\\
    e^{-\mathrm{j}K\omega_0 t_{2}}  & \cdots & 1 &\cdots &e^{\mathrm{j}K\omega_0 t_{2}}\\
    \vdots  & \ddots & \vdots & \ddots &\vdots\\
    e^{-\mathrm{j}K\omega_0 t_{L}}  & \cdots & 1 &\cdots &e^{\mathrm{j}K\omega_0 t_{L}}  \/
    \end{bmatrix}.
\end{equation}
Solving for $\hat{\bd x}$ from $\bd f = \bd G\hat{\bd x}$ gives the Fourier coefficients for $l\in \llbracket -K, K \rrbracket$ that follow Eq.~\eqref{eq:swce_pulse}. Thereafter, the vector of support parameters $\bld \tau$ can be estimated using Prony's annihilating filter method and the vector of amplitude parameters $\bd a$ can be estimated using least-squares regression. We show that it is indeed possible to solve for $\hat{\bd x}$ uniquely in ${\bd f} = \bd G\hat{\bd x}$ by virtue of the left-invertibility of $\bd G$ for $L\geq 2K+1$, which is established next.
\begin{lemma} \label{lem:ctemMatrix}
	The matrix $\bd G \in \bb C^{L\times (2K+1)}$ defined in Eq.~\eqref{eq:ctemMatrix} has full column-rank whenever $L \geq (2K+1)$.
\end{lemma}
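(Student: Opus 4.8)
The plan is to reduce $\bd G$ to a rectangular Vandermonde matrix by a diagonal rescaling of its rows, and then invoke the classical fact that a Vandermonde matrix built from distinct nodes has full column rank. First I would introduce the nodes $z_m = e^{\jj \omega_0 t_m}$, $m=1,\dots,L$, which all lie on the unit circle, so $z_m \neq 0$. Factoring $e^{-\jj K \omega_0 t_m} = z_m^{-K}$ out of the $m$-th row of $\bd G$ shows that $\bd G = \bd D\,\bd W$, where $\bd D = \diag(z_1^{-K},\dots,z_L^{-K})$ is invertible and $\bd W \in \bb C^{L\times(2K+1)}$ is the rectangular Vandermonde matrix with $(\bd W)_{m,j} = z_m^{\,j-1}$, $j=1,\dots,2K+1$. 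Since $\bd D$ is invertible, $\rank(\bd G) = \rank(\bd W)$, so it suffices to show that $\bd W$ has full column rank $2K+1$.

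Next I would verify that the nodes $z_1,\dots,z_L$ are pairwise distinct. The event instants satisfy $0\le t_1 < t_2 < \cdots < t_L \le T$ and $\omega_0 = 2\pi/T$, so the map $t\mapsto e^{\jj\omega_0 t}$ is injective on $[0,T[$\,; the only coincidence that could occur is between the $z$-values at $t=0$ and $t=T$, which is excluded since events are recorded strictly inside the observation interval (equivalently, the event instants are distinct modulo $T$). Hence $z_m \neq z_{m'}$ whenever $m\neq m'$. Then, with $L\geq 2K+1$, I would select any $2K+1$ of the rows of $\bd W$; they form a square Vandermonde matrix whose determinant equals $\prod_{m<m'}(z_{m'}-z_m)\neq 0$, so those rows are linearly independent. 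Therefore $\bd W$, and hence $\bd G$, has full column rank $2K+1$.

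The only subtle point — the main obstacle — is the distinctness of the nodes $z_m$: strict monotonicity of the $t_m$ guarantees they are distinct in $[0,T]$, but one must additionally rule out the wrap-around coincidence between the two endpoints, which is where the mild assumption that no single event occurs simultaneously at $t=0$ and $t=T$ enters. Everything else is the standard diagonal-factorization-plus-Vandermonde argument, and the same reasoning carries over verbatim to the coefficient-recovery matrix $\bd V$ in Eq.~\eqref{eq:amplitude_estimation} once the $\tau_k$ are known to be distinct.
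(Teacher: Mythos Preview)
Your proposal is correct and follows essentially the same route as the paper: both arguments factor out the diagonal $\diag(e^{-\jj K\omega_0 t_m})$ to reduce $\bd G$ to a (rectangular) Vandermonde matrix in the nodes $e^{\jj\omega_0 t_m}$, and then invoke distinctness of those nodes to conclude full column rank. The only cosmetic difference is that the paper first treats the square case $L=2K+1$ via $\det(\bd G)=\det(\bd F)\prod_m e^{-\jj K\omega_0 t_m}$ and then observes that appending rows cannot lower the rank, whereas you directly pick a $(2K+1)\times(2K+1)$ Vandermonde submatrix; the paper handles your wrap-around concern by stipulating $0<t_1<\cdots<t_L<T$.
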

\begin{proof}
	Consider the case where $L=2K+1$, {\it i.e.}, $\bd G$ is a square matrix. By analyzing the determinant, we show that $\bd G$ has full rank. This property carries over to the case when $\bd G$ is a tall matrix ($L\geq 2K+1$), because addition of rows to the $(2K+1)\times (2K+1)$ matrix cannot decrease the rank.\\
	\indent Using the properties of determinants:
	\begin{equation} \label{eq:prodGct}
	\det(\bd G) = \det(\bd F)\cdot \prod_{m=1}^L e^{-\jj K \omega_0 t_m},
	\end{equation}
	where
	\begin{equation*}
	\bd F =
	\begin{bmatrix}
	1  & e^{\mathrm{j}\omega_0 t_{1}} &\hspace{-0.2cm}\cdots &\hspace{-0.1cm}e^{\mathrm{j}2K\omega_0 t_{1}}\\
	1 & e^{\mathrm{j}\omega_0 t_{2}} &\hspace{-0.2cm}\cdots &\hspace{-0.1cm}e^{\mathrm{j}2K\omega_0 t_{2}}\\
	\vdots & \vdots & \ddots & \vdots\\
	1 & e^{\mathrm{j}\omega_0 t_{L}} &\hspace{-0.2cm}\cdots &\hspace{-0.1cm}e^{\mathrm{j}2K\omega_0 t_{L}}  \/
	\end{bmatrix}
	\end{equation*}
is a Vandermonde matrix. The set $\{t_m\}_{m=1}^L$ has increasing and distinct entries, {\it i.e.},  $0 < t_1 < t_2 < \cdots < t_L$, and $\omega_0 = \frac{2\pi}{T}$ with $T > t_L$, which makes the rows of $\bd F$ distinct. Using the properties of Vandermonde matrices \cite{horn2012matrix}, we know that $\det(\bd F) \neq 0$, whenever $L \geq 2K+1$, and each term in the product in Eq.~\eqref{eq:prodGct} is nonzero. Hence, $\det(\bd G) \neq 0$, whenever $L \geq 2K+1$.
\end{proof}
Once the Fourier coefficients are estimated, the $(K+1)$-tap annihilating filter can be determined using Prony's method (cf. Section~\ref{subsec:prony}),
the roots of which are in one-to-one correspondence with the locations $\bld \tau$. Subsequently, the coefficient vector $\bd a$ can be obtained by solving the following system of equations (cf. Eq.~\eqref{eq:swce_pulse}) $\hat{\bd x} = \bd S\bd V\bd a$,
where $\bd S = \text{diag}\left\{\hat{\varphi}(l\omega_0)\right\}_{l=-M}^M$ is a diagonal matrix, and is invertible. $\bd V\in \bb C^{(2M+1)\times K}$, where $2K+1\leq 2M+1\leq L$, has a Vandermonde structure \cite{horn2012matrix} with distinct entries and is invertible. Therefore, $\bd S\bd V$ is invertible.\\
\indent The minimum sampling requirement of $2K+1$ measurements for perfect reconstruction can be ensured by setting the temporal contrast threshold sufficiently low compared with the dynamic range of the filtered signal, according to Lemma~\ref{lem:min_measurements}. We summarize the preceding discussion and provide the sufficient condition for perfect reconstruction in the following proposition.
\begin{proposition}[Perfect reconstruction of a stream of pulses]\label{prop:sparse_bound}
The signal $x(t)$ in Eq.~\eqref{eq:pulse_signal_model} can be perfectly recovered from the events $\sT_C\{f\}$, where $f(t) = (x*g)(t)$, when $g(t)$ satisfies the alias-cancellation conditions (cf. Eq.~\eqref{eq:alias_cancellation}) and the temporal contrast threshold satisfies
\[
    0 < C < \frac{f_{\max} - f_{\min}}{2K+1},
\]
where $\displaystyle f_{\max} = \max_{t\in [0,T]} f(t)$ and $\displaystyle f_{\min} = \min_{t\in [0,T]} f(t)$.
\end{proposition}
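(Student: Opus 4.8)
The plan is to chain together the ingredients already assembled in Sections~\ref{sec:preliminaries} and~\ref{sec:siso_sampling}. First I would observe that the alias-cancellation hypothesis~\eqref{eq:alias_cancellation} collapses $f = x*g$ to the finite trigonometric polynomial in~\eqref{eq:trig_poly}; in particular $f$ is continuous on $[0,T]$ and hence an admissible input to the neuromorphic encoder $\sT_C$, and $f_{\max},f_{\min}$ are attained on the compact interval $[0,T]$, so the stated bound on $C$ is meaningful. This also reduces the reconstruction task to recovering the $2K+1$ Fourier coefficients $\hat{\bd x} = (\hat{x}_{-K},\dots,\hat{x}_K)^\TT$ and then the parameters $(\bd a,\bld\tau)$ from them.

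Next I would invoke Lemma~\ref{lem:min_measurements} with $L=2K+1$: the hypothesis $0<C<(f_{\max}-f_{\min})/(2K+1)$ is exactly the condition guaranteeing that $\sT_C\{f\}$ contains at least $2K+1$ events $\{(t_m,p_m)\}$ on $[0,T]$. Feeding these events together with the reference value $f(t_0)$ into the neuromorphic $t$-transform (Lemma~\ref{lem:ttransform}) then yields the amplitude vector $\bd f = [f(t_1)\;\cdots\;f(t_L)]^\TT$ exactly. By the Fourier computation~\eqref{eq:filteredSignalDerivation}--\eqref{eq:trig_poly}, $\bd f$ satisfies the linear system $\bd f = \bd G\hat{\bd x}$ with $\bd G\in\bb C^{L\times(2K+1)}$ given by~\eqref{eq:ctemMatrix}; since $L\ge 2K+1$, Lemma~\ref{lem:ctemMatrix} shows $\bd G$ has full column rank, hence admits a left inverse, and $\hat{\bd x}$ is determined uniquely.

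Finally I would run the parameter-estimation pipeline. Dividing out the factors $\hat{\varphi}(l\omega_0)$ exhibits the sequence $\hat{x}_l/\hat{\varphi}(l\omega_0)$, $|l|\le K$, as a sum of $K$ weighted complex exponentials in $l$ (cf.~\eqref{eq:swce_pulse}); with $2K+1$ contiguous values in hand, Prony's annihilating-filter method (Section~\ref{subsec:prony}) returns a $(K+1)$-tap filter whose $K$ roots $\vartheta_k = e^{-\jj 2\pi\tau_k/T}$ are in one-to-one correspondence with the distinct locations $\tau_k\in[0,T[$, so $\bld\tau$ is recovered exactly. Plugging $\bld\tau$ back into~\eqref{eq:swce_pulse} gives $\hat{\bd x} = \bd S\bd V\bd a$ with $\bd S = \diag\{\hat{\varphi}(l\omega_0)\}$ invertible and $\bd V$ Vandermonde with distinct nodes hence left-invertible, so $\bd a$ is the unique (least-squares) solution; substituting $(\bd a,\bld\tau)$ into~\eqref{eq:pulse_signal_model} reconstructs $x$.

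I do not expect a genuine obstacle here: the argument is essentially the bookkeeping of stitching Lemmas~\ref{lem:min_measurements},~\ref{lem:ttransform} and~\ref{lem:ctemMatrix} together. The two points I would take care to state explicitly are (i) that the $t$-transform needs the reference amplitude $f(t_0)$, which should be posited as a known initial condition (e.g.\ $f(t_0)=0$ when $x$ is quiescent before $\tau_0$), and (ii) that both the Prony step and the invertibility of $\bd S$ tacitly require $\hat{\varphi}(l\omega_0)\neq 0$ for all $|l|\le K$, a mild nondegeneracy condition on the pulse $\varphi$ that ought to accompany the hypotheses.
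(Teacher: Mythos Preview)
Your proposal is correct and mirrors the paper's own argument: the proposition is stated there as a summary of the preceding development in Section~\ref{sec:siso_sampling}, which strings together exactly Lemmas~\ref{lem:min_measurements}, \ref{lem:ttransform}, and \ref{lem:ctemMatrix} with Prony's method and the least-squares step, just as you outline. Your two explicit caveats --- the need for the reference amplitude $f(t_0)$ and the nondegeneracy $\hat{\varphi}(l\omega_0)\neq 0$ --- are implicit in the paper (the former is built into Lemma~\ref{lem:ttransform}, the latter is tacitly assumed when asserting $\bd S$ invertible) and are worth making explicit.
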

The critical temporal contrast threshold for a signal $f(t)$ in a $\Delta$-shift-invariant space is defined as \cite{kamath2023neuromorphic}:
\[
	C_f(\Delta) \overset{\text{def.}}{=} \frac{1}{2} \inf_{\tau\in\bb R} \abs{\max_{\tau<t<\tau+\Delta} f(t) - \min_{\tau<t<\tau+\Delta} f(t)}.
\]
If the temporal contrast threshold $C < C_f(\Delta)$, then neuromorphic encoding is guaranteed to generate at least one event in an interval of size $\Delta$. We demonstrated perfect reconstruction of signals in shift-invariant spaces from neuromorphic measurements subject to this condition \cite{kamath2023neuromorphic}.\\
\indent In the FRI context, the critical threshold definition takes the form:
\[
	C_f(T) = \frac{1}{\rho+1} \abs{\max_{t \in [0,T]} f(t) - \min_{t \in [0,T]} f(t)},
\]
where $\rho$ is the rate of innovation of the FRI signal $x(t)$. For $x(t)$ considered in \eqref{eq:pulse_signal_model}, $\rho = 2K$, and hence the upper bound of the temporal contrast threshold in Proposition~\ref{prop:sparse_bound} is the critical threshold. The definition of the critical threshold is also consistent with signals in shift-invariant spaces, including bandlimited signals. For instance, signals bandlimited  to $[-\frac{\pi}{T}, \frac{\pi}{T}]$ have a rate of innovation of $\frac{1}{T}$. The critical threshold for bandlimited signals is half the dynamic range in each interval of size $T$, which translates to having at least one event/measurement in a $T$-length interval.\\
\indent Following the {\it time-amplitude duality} \cite{martinez2019delta} in neuromorphic sampling, reconstruction from events requires knowledge of the signal's dynamic range $[f_{\min}, f_{\max}]$, as opposed to knowledge of the sampling interval/density in the case of uniform/nonuniform sampling. Since FRI signals are inherently sparse, the upper bound on the temporal contrast threshold is realizable, which enables perfect reconstruction. The technique for perfect reconstruction of FRI signals from neuromorphic measurements is summarized in Algorithm~\ref{algo:sparse}.
\begin{figure*}[!t]
\setcounter{subfigure}{-2}
	\centering
	\subfigure{\label{fig:}\includegraphics[height=.215in]{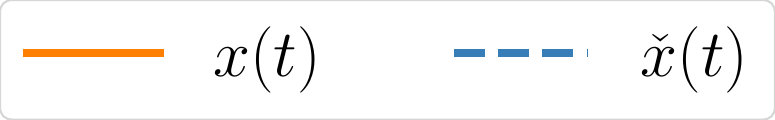}}
	\subfigure{\label{fig:}\includegraphics[height=.215in]{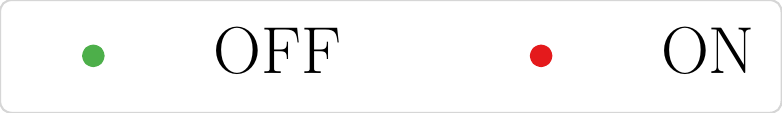}}

	\subfigure[Stream of Dirac impulses]{\label{fig:recons_sparse}\includegraphics[width=3.2in]{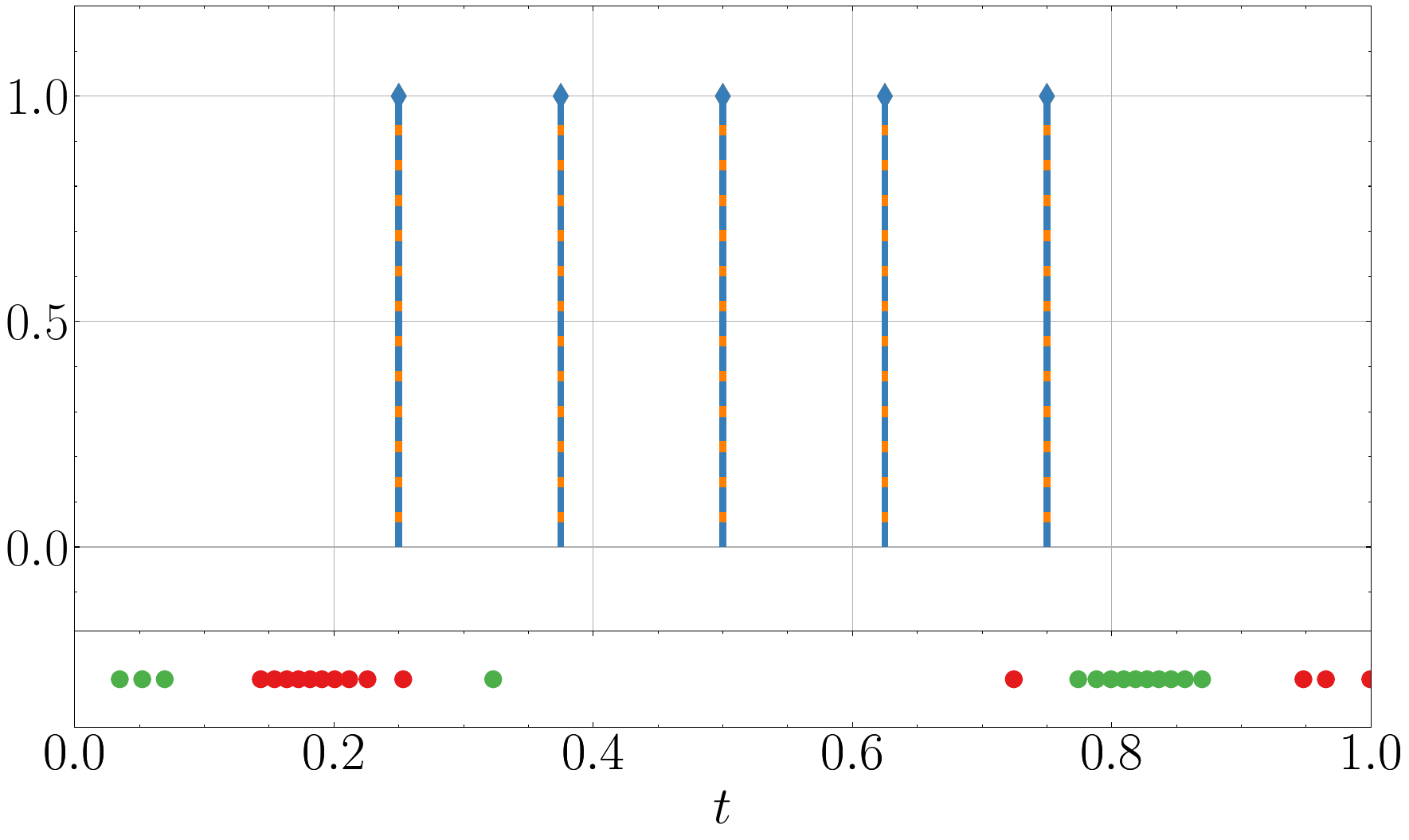}}
	\subfigure[Stream of pulses]{\label{fig:recons_pulse}\includegraphics[width=3.2in]{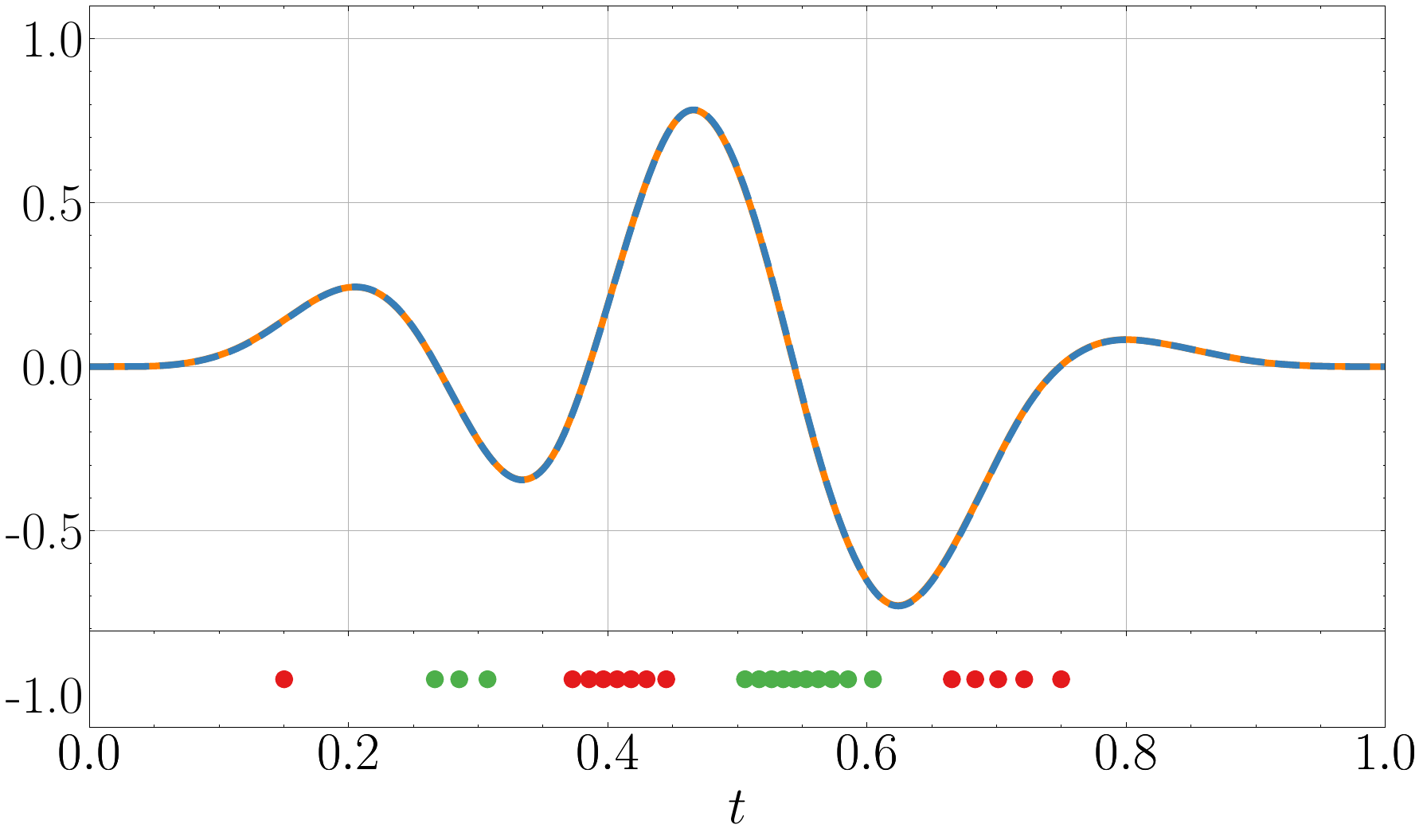}}
	\subfigure[A piecewise-constant signal]{\label{fig:recons_linear}\includegraphics[width=3.22in]{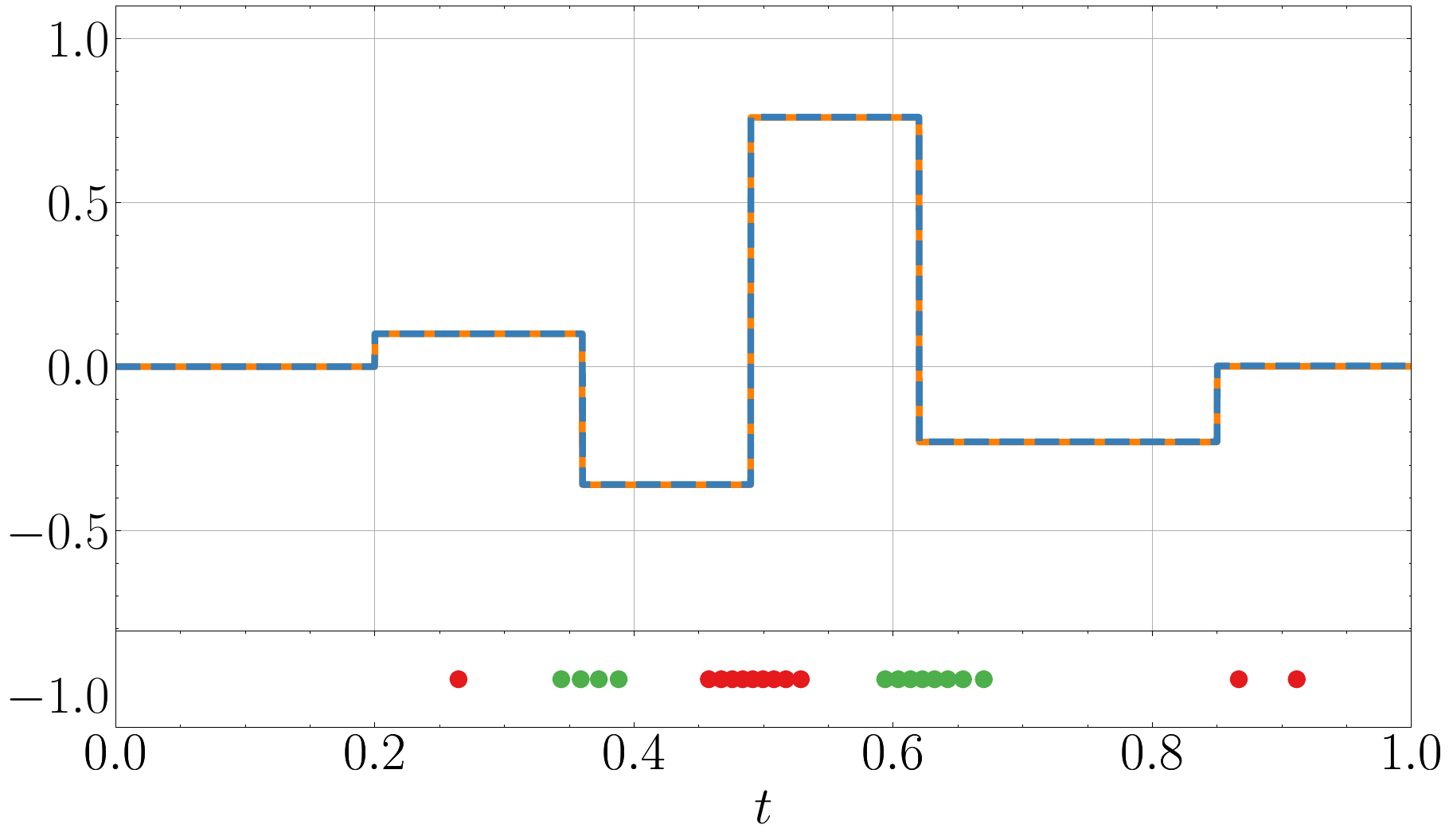}}
	\subfigure[A piecewise-linear signal]{\label{fig:recons_quad}\includegraphics[width=3.22in]{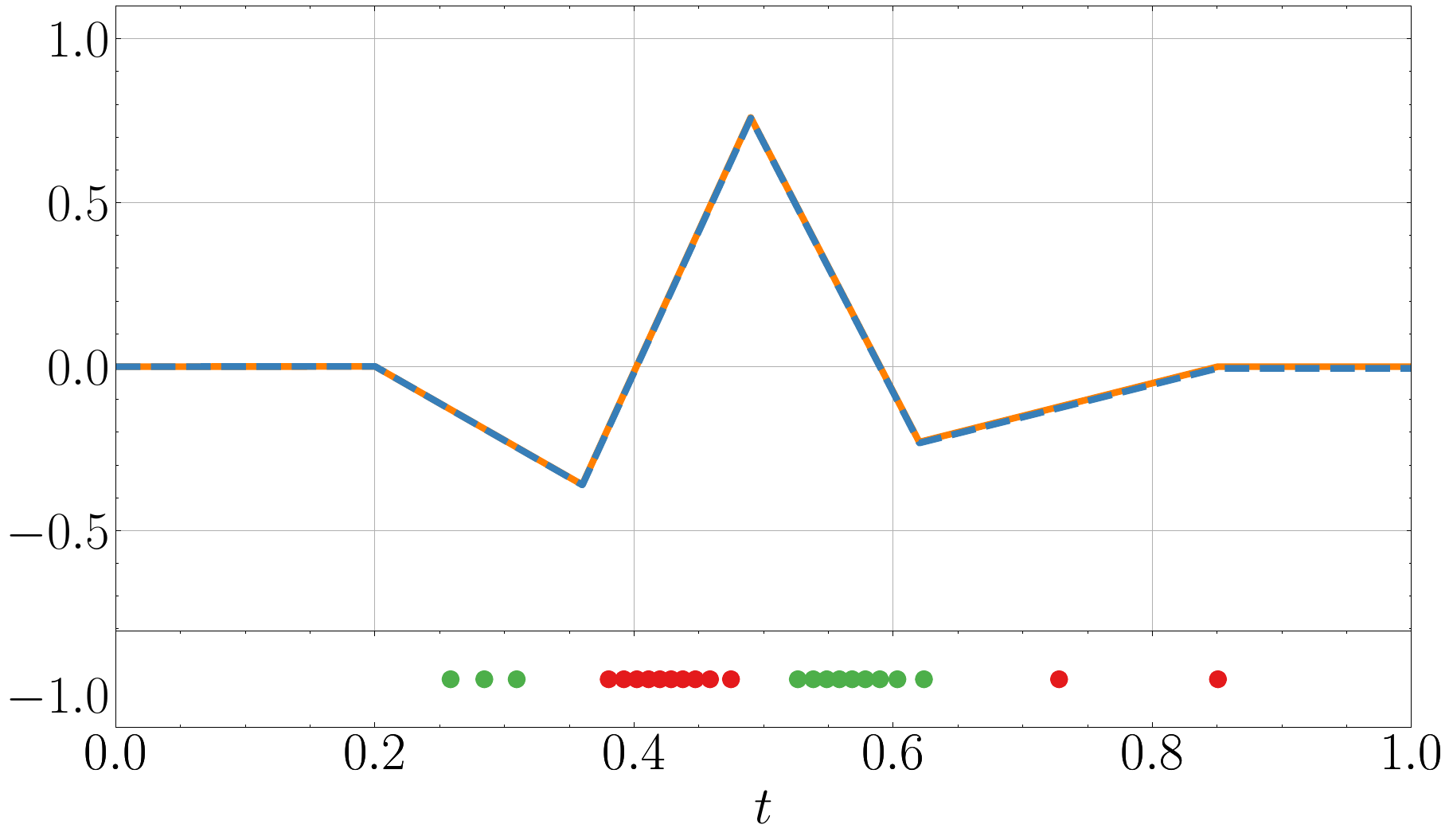}}
	\caption{Neuromorphic sampling of FRI signals using the zeroth-order SMS sampling kernel $g(t)$ and perfect reconstruction using Prony's method for various examples. The FRI signal $x(t)$, the reconstruction $\check{x}(t)$; and the corresponding ON and OFF events $\{(t_1,p_1), (t_2,p_2), \ldots\}$ are also shown.}
	\label{fig:SISO_reconstruction}
\end{figure*}

\subsection{Reconstruction of Nonuniform $\mathrm{L}$-Splines}
\label{subsec:nonuniform_spline_reconstruction}
A signal $x(t)$ is said to be a nonuniform $\mathrm L$-spline of degree $n\in\bb N$ with {\it knots} at $\{\tau_k\}_{k=0}^{K-1}$ if its $(n+1)$\textsuperscript{th} derivative is a stream of Dirac impulses \cite{unser1999splines}:
\begin{equation}\label{eq:nonuniform_Lsplines}
	\mathrm{D}^{n+1}\{x\}(t) = \sum_{k=0}^{K-1} a_k\delta(t-\tau_k).	
\end{equation}
Piecewise-constant and piecewise-polynomial functions are examples of $\mathrm{L}$-splines. The derivative property of the $\mathrm{L}$-spline makes a natural connection with continuous-time sparse/FRI signals, defined in Eq.~\eqref{eq:sparse_model}, under the operator $\mathrm{L} = \mathrm{D}^{n+1}$.\\
\indent Consider neuromorphic sampling of the nonuniform $\mathrm L$-spline $x(t)$. The Fourier coefficients of $\mathrm{D}^{n+1}\{x\}(t)$ are given by
\begin{equation}\label{eq:swce_spline}
	\hat{x}_l = \frac{1}{\left(\jj\omega_0 l\right)^{n+1}}\frac{1}{T}\sum_{k=0}^{K-1} a_{k} e^{-\jj \omega_{0}l \tau_{k}},
\end{equation}
which has the sum-of-complex-exponentials structure and is amenable to Prony's method (cf. Section~\ref{subsec:prony}) for support estimation. The analysis in Section~\ref{subsec:pulse_stream_reconstruction} carries over to the nonuniform $\mathrm{L}$-spline scenario with the exception that the diagonal matrix $\bd S$ is now given by $\bd S = \text{diag}\left\{ (\jj \omega_0 l)^{-n-1} \right\}_{l=-M}^M$.\\
\indent The minimum sampling requirement is $2K+1$ measurements, which gives identical bounds on the temporal contrast threshold as stated in Proposition~\ref{prop:sparse_bound}.
\begin{algorithm}[t]
    \caption{Perfect reconstruction of FRI signals from neuromorphic measurements.} \label{algo:sparse}
    \KwIn{Measurements $\sT_C\{f\} = \{(t_m, p_m)\}_{m=1}^L$, temporal contrast threshold $C$, model order $K$}
    {\bf Compute:} $\displaystyle f(t_m) = f(t_0) + C\sum_{i=1}^m p_i, \; m\in \llbracket 1,L\rrbracket$\;
    {\bf Solve Eq.~\eqref{eq:ctemMatrix}:} $\bd f = \bd G\hat{\bd x}$, for $\hat{\bd x}$ \;
    {\bf Support recovery:} Solve $(\Gamma_M\hat{\bd x})\bd h\! =\! \bld 0$ to obtain $\bld \tau$\;
    {\bf Coefficient recovery:} Solve $\hat{\bd x}\! =\! \bd S\bd V\bd a$ to obtain $\bd a$\;
\end{algorithm}


\subsection{Experimental Results}
We validate the proposed reconstruction strategy through simulations. Consider a stream of $K=5$ Dirac impulses
\[
	x(t) = \sum_{k=0}^4 \delta(t-\tau_k),	
\]
where $\{\tau_k\}_{k=0}^{4}=\{0.25, 0.375, 0.5, 0.625, 0.75\}$, {\it i.e.}, the locations are equally spaced in $[0.25,0.75]$ and the coefficients are set to unity. We use the zeroth-order SMS kernel \cite{mulleti2017paley} as the sampling kernel. The choice of the signal corresponds to the condition where the filtered signal $f(t)=(x*g)(t)$ has a low amplitude variation. The events $\sT_C\{f\}$ are obtained by setting the temporal contrast threshold $C=1/11$, according to Proposition~\ref{prop:sparse_bound}. The reconstruction is based on Algorithm~\ref{algo:sparse}. Figure~\ref{fig:recons_sparse} shows the stream of Dirac impulses $x(t)$, the filtered signal $f(t)$ and the events $\sT_C\{f\}$, and the reconstruction $\check{x}(t)$ with parameters estimated accurately up to machine precision, indicating perfect reconstruction for all practical purposes.\\
\indent Next, we consider a stream of $K=5$ pulses
\[
	x(t) = \sum_{k=0}^4 a_k\varphi(t-\tau_k),	
\]
where the coefficients are chosen as $\{a_k\}_{k=0}^{4}=\{0.49, -0.65,  0.47, -0.52,  0.22\}$, and support parameters are set to $\{\tau_k\}_{k=0}^{4}=\{0.22, 0.35, 0.46, 0.62, 0.79\}$, and the pulse $\varphi(t) = \beta^{(3)}\left(\frac{t}{10}\right)$ is a time-scaled cubic B-spline. We use the zeroth-order SMS kernel for sampling and set the temporal contrast threshold $C=0.015$ for obtaining events $\sT_C\{f\}$. Reconstruction is carried out following Algorithm~\ref{algo:sparse}. Figure~\ref{fig:recons_pulse} shows the stream of pulses $x(t)$, the filtered signal $f(t)$, and the recorded events $\sT_C\{f\}$, together with the reconstruction $\check{x}(t)$. The parameters are estimated accurately up to machine precision, indicating perfect reconstruction.\\
\indent Next, we consider nonuniform $\mathrm D^j$-splines, for $j=1, 2$. The choice $j=1$ corresponds to piecewise-constant signals and $j=2$ corresponds to piecewise-linear signals. The signals are acquired using the zeroth-order SMS kernel \cite{mulleti2017paley}, and the events are obtained by setting the temporal contrast threshold $C=$ for $j=1$, and $C=$ for $j=2$, following Proposition~\ref{prop:sparse_bound}. We use Prony's method as described in Algorithm~\ref{algo:sparse} for reconstruction. Figure~\ref{fig:recons_linear} and Figure~\ref{fig:recons_quad} show the $\mathrm D^j$-sparse signal $x(t)$ for $j=1,2$, the filtered signal $f(t)$ and the events $\sT_C\{f\}$ obtained, along with the reconstruction $\check{x}(t)$, respectively. Yet again, we have parameter estimation up to machine precision, indicating perfect reconstruction.


\begin{figure*}[!t]
	\centering
	\subfigure[Kernel-based SIMO neuromorphic sampling]{\label{fig:simo_encoding_schematic}\includegraphics[width=3.2in]{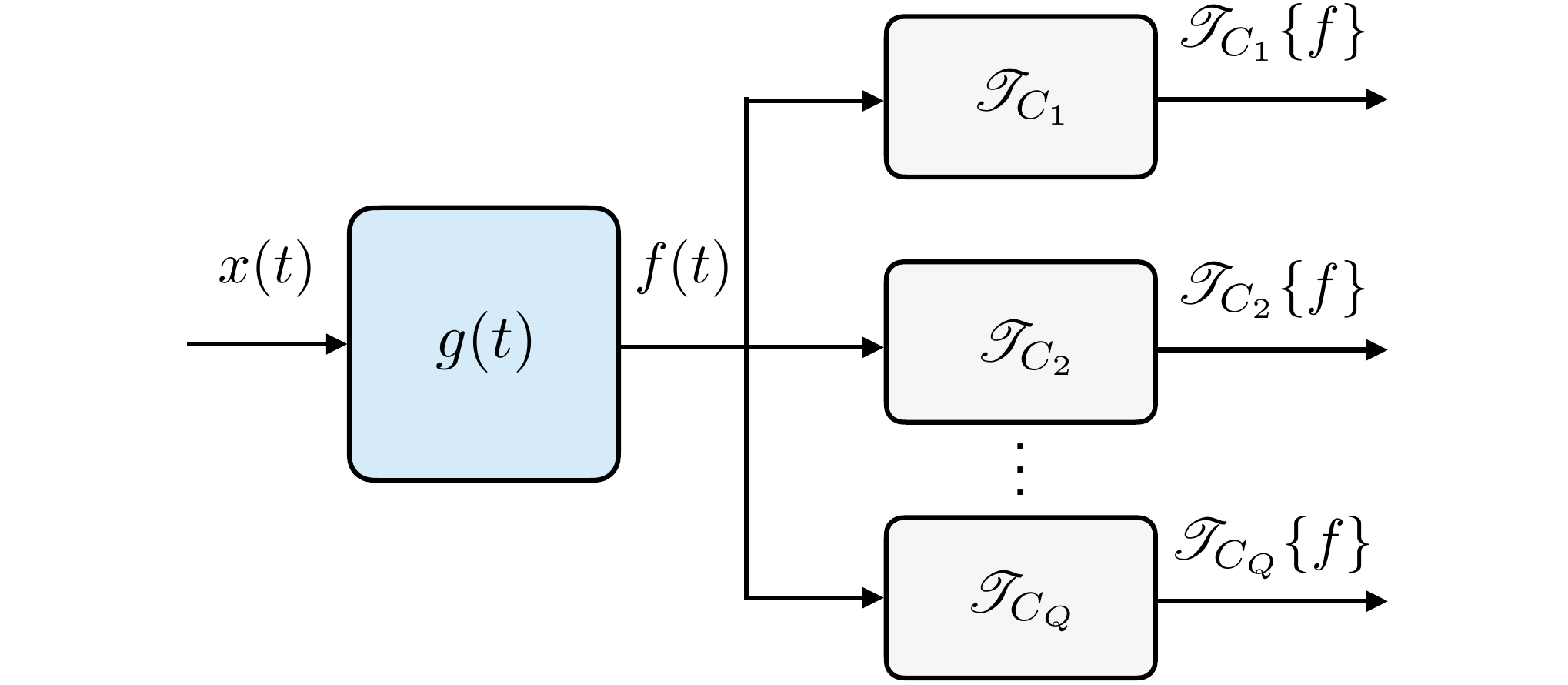}}
	\subfigure[Kernel-based MIMO neuromorphic sampling]{\label{fig:mimo_encoding_schematic}\includegraphics[width=3.2in]{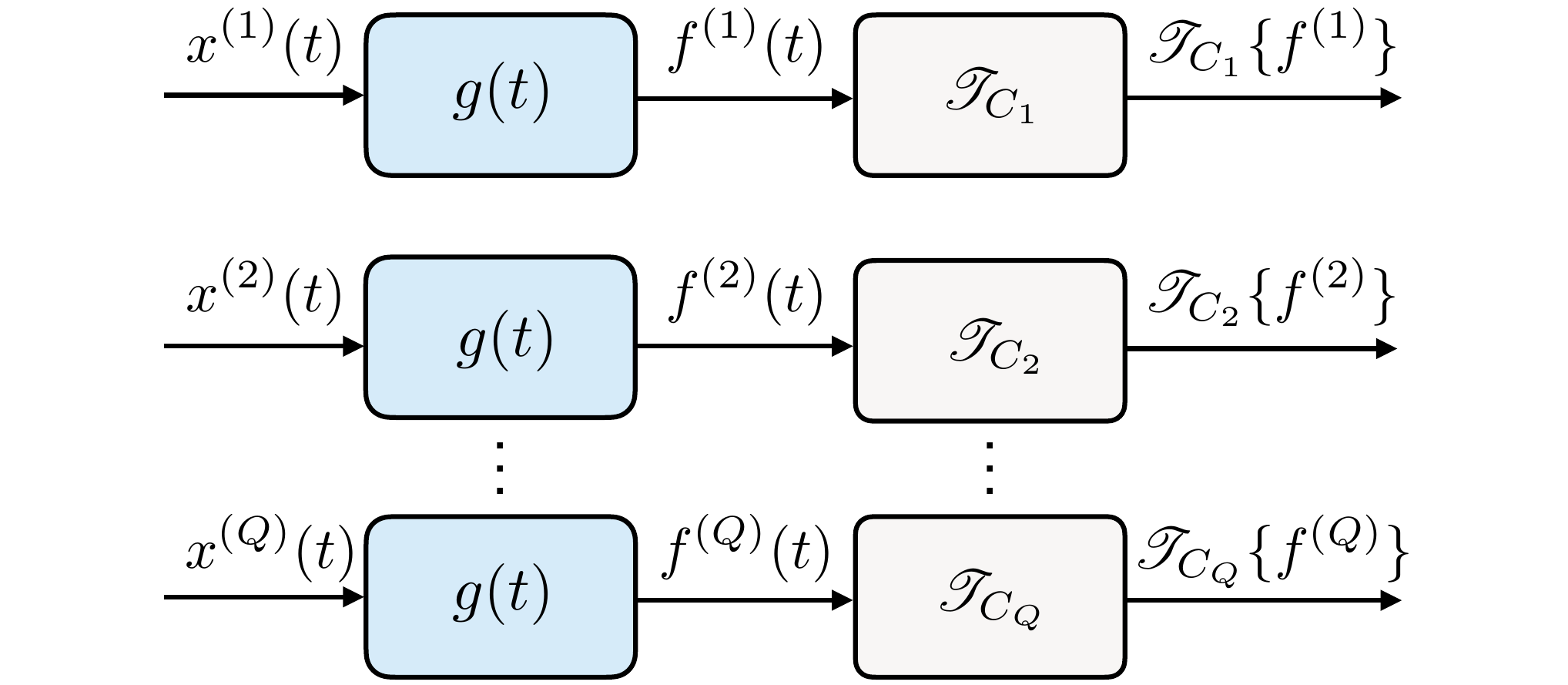}}
	\caption{Schematic of kernel-based multichannel neuromorphic sampling of FRI signals using a sampling kernel $g(t)$ that satisfies the alias-cancellation condition in Eq.~\eqref{eq:alias_cancellation}: (a) shows the single-input multi-output (SIMO) configuration; and (b) shows the multi-input multi-output (MIMO) configuration.}
	\label{fig:multichannel_schematic}
\end{figure*}
\section{Multichannel Neuromorphic Sampling}
\label{sec:multichannel_sampling}
We now consider multichannel extension of neuromorphic sampling of FRI signals in two configurations --- single-input multi-output (SIMO); and multi-input multi-output (MIMO). The corresponding schematics are shown in Figure~\ref{fig:multichannel_schematic}. The number of neuromorphic encoders is $Q$. Multichannel FRI signal models are encountered in applications such as ultrasound imaging \cite{tur2011innovation} and radar imaging \cite{rudresh2017radar}.


\subsection{SIMO Neuromorphic Sampling}
\label{subsec:simo_sampling}
Consider SIMO neuromorphic sampling (Figure~\ref{fig:simo_encoding_schematic}) of the FRI signal
in Eq.~\eqref{eq:pulse_signal_model}. Let $\sT_{C^{(i)}}\{f\} = \{(t^{(i)}_m,p^{(i)}_m)\}_{m=0}^{L^{(i)}}$ be the events generated by the $i$\textsuperscript{th} channel, where $f(t) = (x * g)(t)$ denotes the filtered FRI signal, and $L^{(i)}$ denotes the number of measurements obtained in the $i$\textsuperscript{th} channel. Using Lemma~\ref{lem:ttransform}, we obtain the amplitude samples of the vector that is input to the neuromorphic encoder: $\bd f^{(i)} = \left[f^{(i)}\left(t^{(i)}_1\right)\; f^{(i)}\left(t^{(i)}_2\right) \cdots f^{(i)}\left(t^{(i)}_{L^{(i)}}\right)\right]^\TT$, which are linearly related to the Fourier coefficients of the FRI signal. In vector notation, we can write $\bd f^{(i)} = \bd G^{(i)}\hat{\bd x}, \; i\in\llbracket 1,Q\rrbracket$. Since the Fourier coefficients of the signal are identical across channels, $\hat{\bd x}$ can be jointly estimated from the measurements by solving
\begin{equation}\label{eq:block_estimation}
	\begin{bmatrix}
	\bd f^{(1)} \\ \bd f^{(2)} \\ \vdots \\ \bd f^{(Q)}
	\end{bmatrix} =
	\begin{bmatrix}
	\bd G^{(1)} \\ \bd G^{(2)} \\ \vdots \\ \bd G^{(Q)}
	\end{bmatrix}
	\hat{\bd x}\triangleq \bd G\hat{\bd x}.
\end{equation}
The system will admit a unique solution $\hat{\bd x}$ when $\bd G$, which has size $\left(\sum_{i=1}^{Q}L^{(i)}\right)\times (2K+1)$, has full column-rank. This happens when the matrix is tall and when no two channels have identical trigger times, which can be achieved by setting different values for the temporal contrast threshold across channels. Effectively, the scheme allows for a reduction in the sampling requirement in each channel. To obtain the minimal number of measurements, we set the temporal contrast threshold to a value greater than $Q$ times the critical threshold stated in Proposition~\ref{prop:sparse_bound}. This result also readily extends to nonuniform $\mathrm L$-splines. We summarize the result in the following proposition.
\begin{proposition}[Perfect reconstruction of FRI signals from SIMO neuromorphic encoding]\label{prop:simo_sparse_bound}
	The FRI signal $x(t)$ in Eq.~\eqref{eq:pulse_signal_model} can be perfectly recovered from the events $\sT_{C^{(i)}}\{f\}$, where $f(t) = (x*g)(t)$, where $g(t)$ satisfies the alias-cancellation conditions (cf. Eq.~\eqref{eq:alias_cancellation}) and the temporal contrast thresholds satisfy
	\[
		0 < C^{(j)} \neq C^{(i)} < \frac{Q(f_{\max} - f_{\min})}{2K+1}, \; j\neq i,
	\]
	where $\displaystyle f_{\max} = \max_{t\in [0,T]} f(t)$ and $\displaystyle f_{\min} = \min_{t\in [0,T]} f(t)$.
\end{proposition}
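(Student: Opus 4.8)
The plan is to reduce the SIMO problem to the single-channel analysis of Section~\ref{sec:siso_sampling} by a stacking argument, so that the only genuinely new ingredient is the full-column-rank of the concatenated matrix $\bd G$ in \eqref{eq:block_estimation}. First I would invoke the neuromorphic $t$-transform (Lemma~\ref{lem:ttransform}) in each channel $i\in\llbracket 1,Q\rrbracket$ to convert the events $\sT_{C^{(i)}}\{f\}$ into the amplitude samples $\bd f^{(i)}$ of the common filtered signal $f=x*g$; since $g$ obeys the alias-cancellation conditions \eqref{eq:alias_cancellation}, $f$ is the trigonometric polynomial $\sum_{l=-K}^{K}\hat{x}_l e^{\jj\omega_0 l t}$, so evaluating it at the event instants gives exactly $\bd f^{(i)} = \bd G^{(i)}\hat{\bd x}$ with $\bd G^{(i)}$ of the form \eqref{eq:ctemMatrix} built from $\{t^{(i)}_m\}_{m=1}^{L^{(i)}}$. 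Stacking over channels yields $\bd f = \bd G\hat{\bd x}$ as in \eqref{eq:block_estimation}, and everything downstream --- Prony support recovery and least-squares coefficient recovery --- is identical to Proposition~\ref{prop:sparse_bound}, so it suffices to prove that $\bd G$ has full column rank $2K+1$.

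For the rank claim I would argue in two steps. \emph{Count.} By Lemma~\ref{lem:min_measurements}, a channel whose threshold satisfies $C^{(i)} < Q(f_{\max}-f_{\min})/(2K+1)$ records at least $\lceil (2K+1)/Q\rceil$ events, hence $\sum_{i=1}^{Q} L^{(i)} \geq 2K+1$; distinctness of event instants within each channel is automatic from $0<t^{(i)}_1<\cdots<t^{(i)}_{L^{(i)}}<T$, and the hypothesis $C^{(j)}\neq C^{(i)}$ for $j\neq i$ prevents two channels from triggering at a common instant, so the pooled set $\mathcal{U}=\bigcup_i\{t^{(i)}_m\}$ has $|\mathcal{U}| = \sum_i L^{(i)} \geq 2K+1$ distinct elements, all lying in $[0,T[$. \emph{Extract.} Pick any $2K+1$ of these, say $t'_1<\cdots<t'_{2K+1}$; the corresponding rows of $\bd G$ form precisely the square matrix of \eqref{eq:ctemMatrix} with $L=2K+1$ distinct abscissae and $T>t'_{2K+1}$, which is nonsingular by Lemma~\ref{lem:ctemMatrix} (its determinant factors as a nonzero monomial times a Vandermonde determinant in the distinct nodes $e^{\jj\omega_0 t'_j}$). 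Therefore $\rank(\bd G) = 2K+1$, and $\hat{\bd x}$ is the unique solution of $\bd f = \bd G\hat{\bd x}$, obtained by left-inversion.

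With $\hat{\bd x}$ in hand I would close exactly as in Section~\ref{sec:siso_sampling}: since the $\hat{x}_l$ have the sum-of-weighted-complex-exponentials structure \eqref{eq:swce_pulse}, the recovered $2K+1$ coefficients suffice for Prony's method (Section~\ref{subsec:prony}) --- embed into $\Gamma_M\hat{\bd x}$, take the annihilating filter $\bd h$ from its nullspace via Eckart--Young, and read off $\tau_k = -\tfrac{T}{2\pi}\angle\vartheta_k$ from the roots $\vartheta_k$ --- and then solve the invertible system $\hat{\bd x} = \bd S\bd V\bd a$ (with $\bd S = \diag\{\hat\varphi(l\omega_0)\}$ invertible on the band and $\bd V$ Vandermonde with distinct nodes) for the coefficients $\bd a$, recovering $x(t)$. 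The nonuniform $\mathrm L$-spline case is identical with $\bd S = \diag\{(\jj\omega_0 l)^{-n-1}\}$ as in Section~\ref{subsec:nonuniform_spline_reconstruction}.

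I expect the main obstacle to be the rank step --- specifically, arguing that rows coming from different channels genuinely enlarge the column space rather than being redundant. The resolution is that the only structure shared by all rows is the fixed family of $2K+1$ exponential columns, while the abscissae (trigger times) vary, so once the pooled trigger set has $2K+1$ distinct values the Vandermonde argument of Lemma~\ref{lem:ctemMatrix} applies verbatim; the role of the distinct-threshold hypothesis $C^{(j)}\neq C^{(i)}$ is precisely to guarantee (generically) that no cross-channel collisions occur, while Lemma~\ref{lem:min_measurements} supplies the count $\sum_i L^{(i)}\geq 2K+1$. A minor point worth stating carefully is the integer rounding in that count --- one selects each $C^{(i)}$ within the stated range so that the per-channel yields sum to at least $2K+1$ --- but this is routine and does not affect the argument.
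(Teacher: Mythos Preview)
Your proposal is correct and mirrors the paper's own argument essentially step for step: apply the $t$-transform channel-wise, stack into the block system \eqref{eq:block_estimation}, argue full column rank of $\bd G$ from distinct pooled trigger times plus the Vandermonde structure of Lemma~\ref{lem:ctemMatrix}, with Lemma~\ref{lem:min_measurements} supplying the count, and then finish via Prony and least squares exactly as in the single-channel case. Your explicit flags on the two soft spots --- that $C^{(j)}\neq C^{(i)}$ only \emph{generically} precludes cross-channel trigger-time collisions, and the integer-rounding caveat in summing per-channel event counts to reach $2K+1$ --- are well taken and in fact make your write-up slightly more careful than the paper's own informal discussion preceding the proposition.
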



\begin{figure*}[!t]
\setcounter{subfigure}{-2}
	\centering
	\subfigure{\label{fig:}\includegraphics[height=.215in]{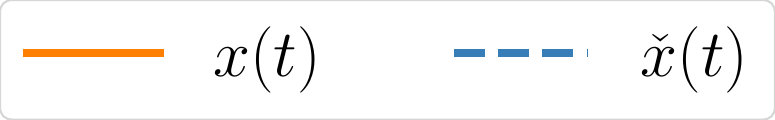}}
	\subfigure{\label{fig:}\includegraphics[height=.215in]{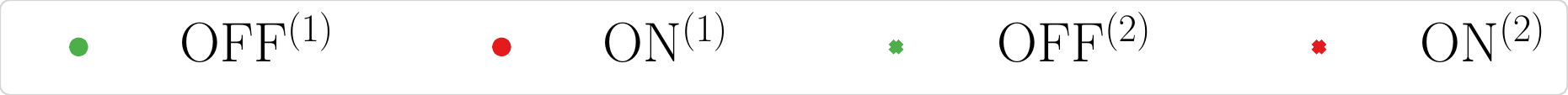}}
	\subfigure[Stream of Dirac impulses]{\label{fig:}\includegraphics[width=3.2in]{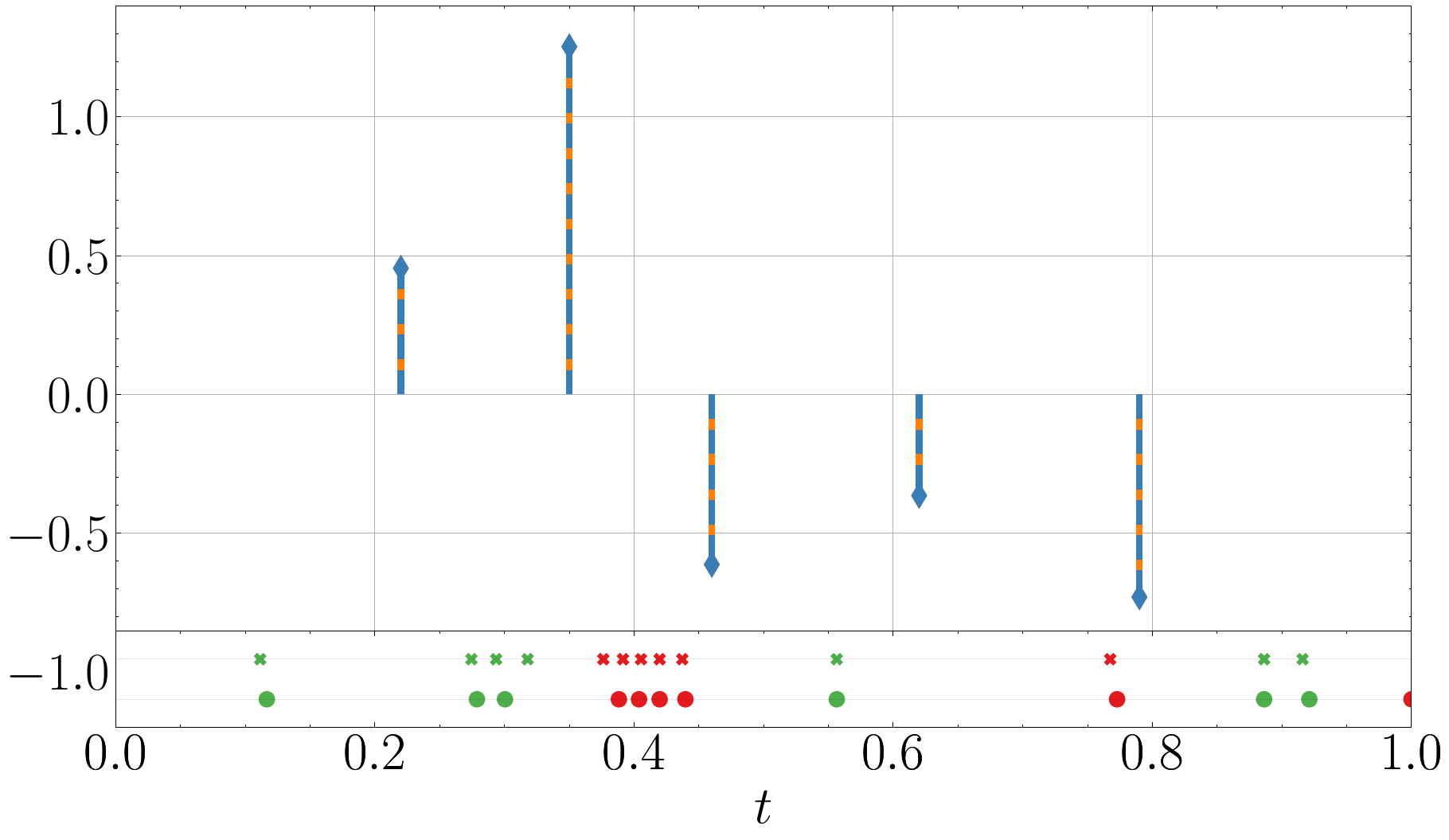}}
	\subfigure[Stream of cubic B-spline pulses]{\label{fig:}\includegraphics[width=3.2in]{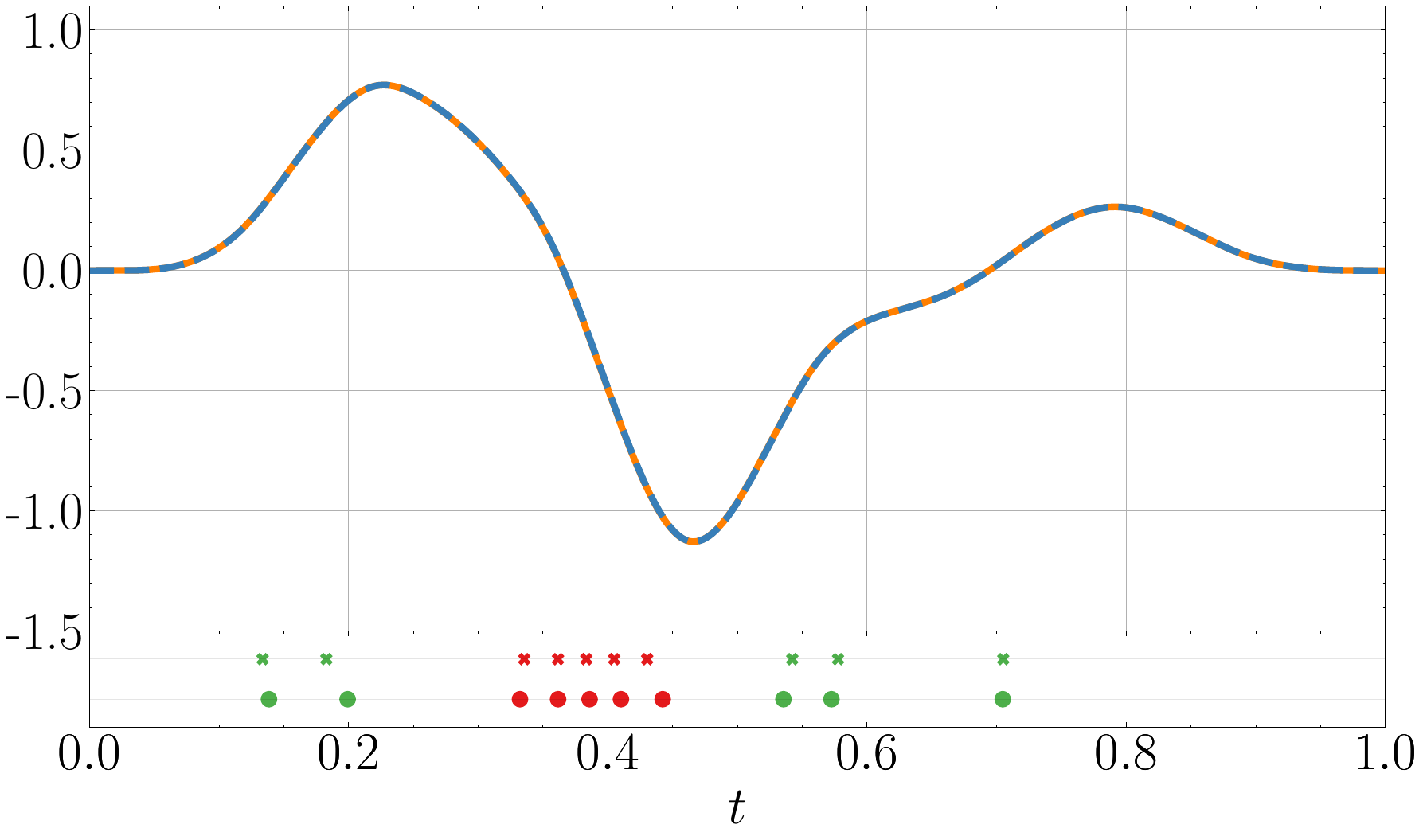}}
	\subfigure[A piecewise-constant signal]{\label{fig:}\includegraphics[width=3.2in]{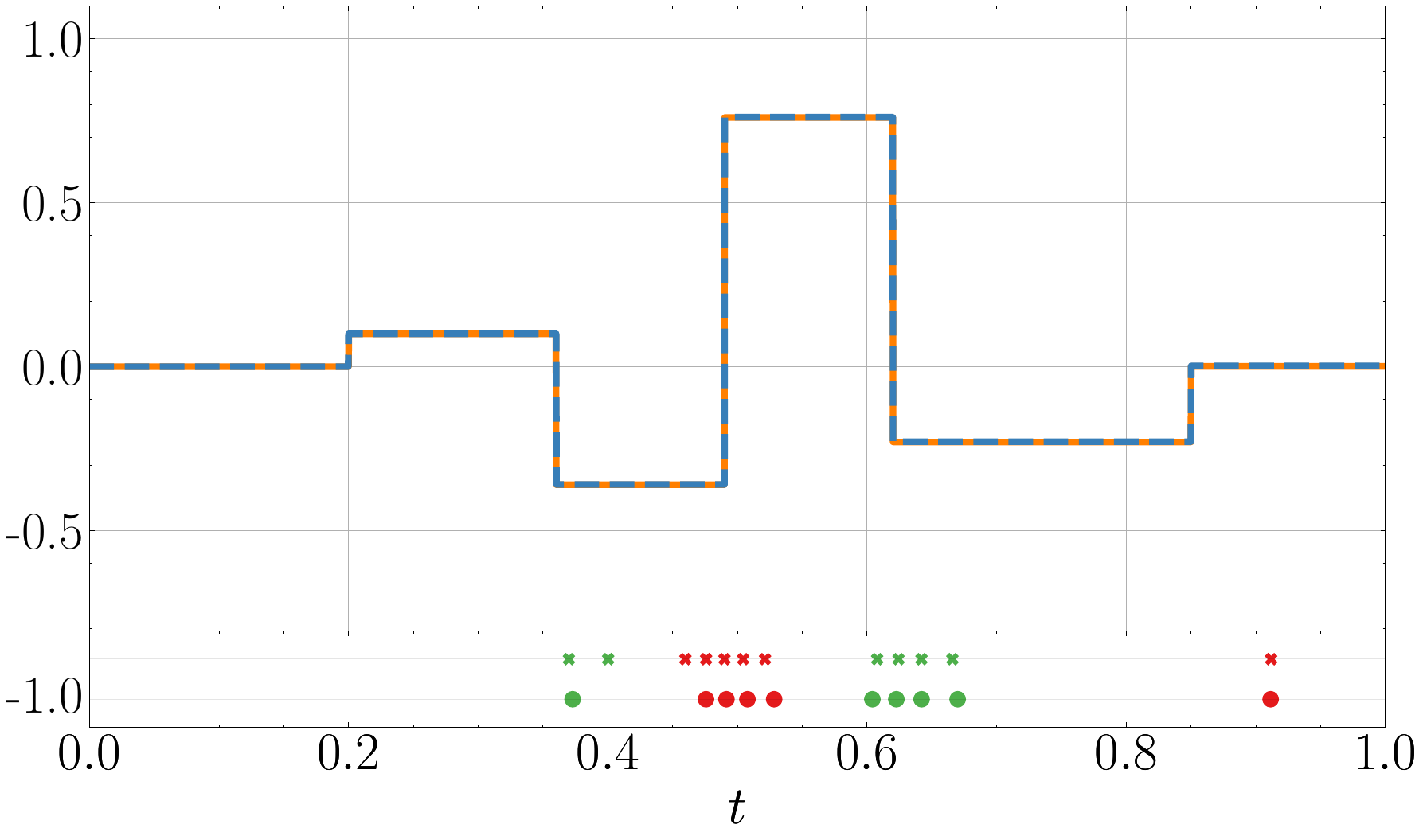}}
	\subfigure[A piecewise-linear signal]{\label{fig:}\includegraphics[width=3.2in]{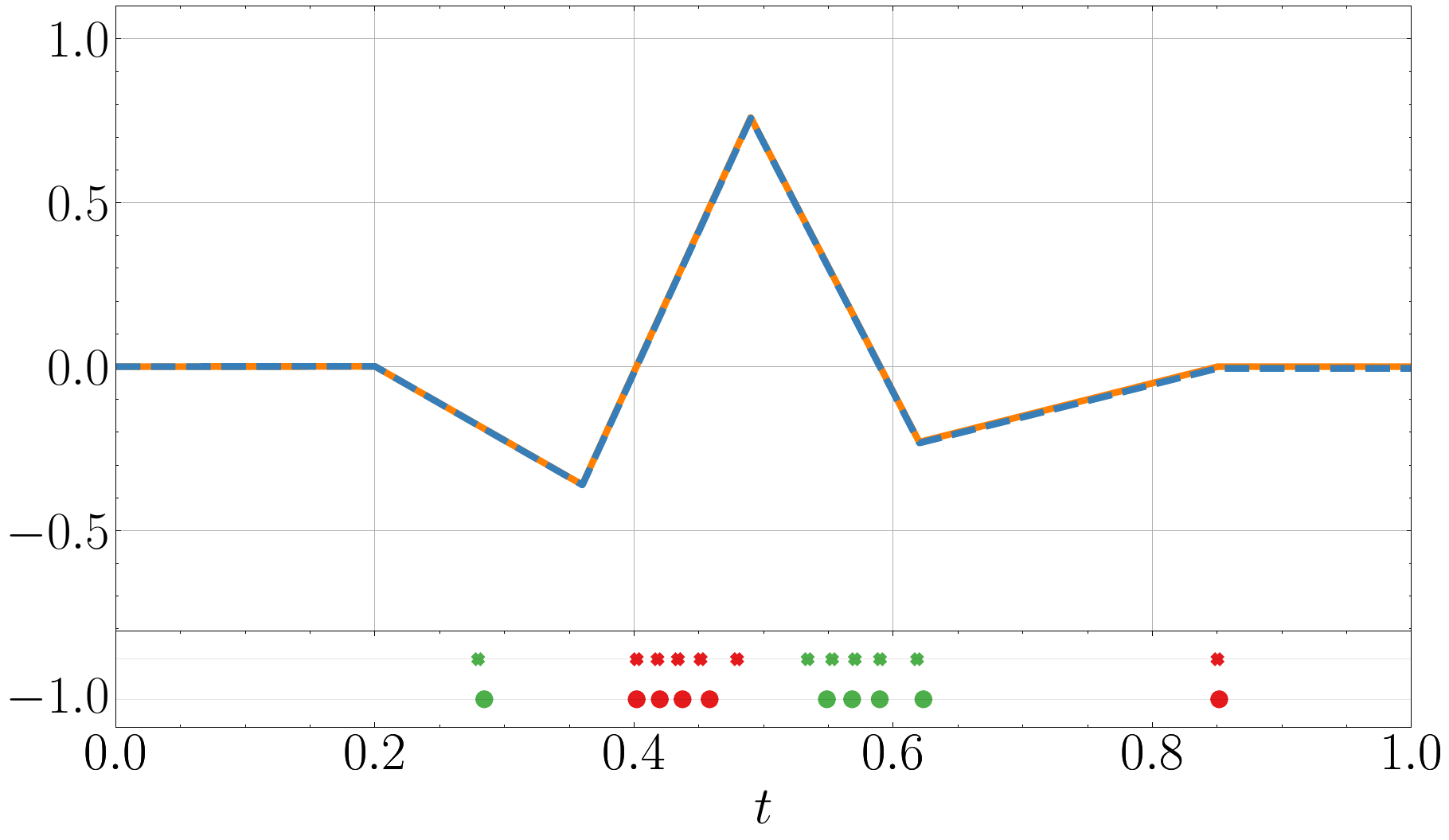}}
	\caption{SIMO neuromorphic sampling of FRI signals, using $Q=2$ channels, and the zeroth-order SMS sampling kernel $g(t)$ and perfect reconstruction using Prony's method for various examples. The FRI signal $x(t)$, the reconstruction $\check{x}(t)$; and the ON (green) and OFF (red) events $\{(t^{(i)}_1,p^{(i)}_1), (t^{(i)}_2,p^{(i)}_2), \ldots\}$ in channels $i=1,2$ are also shown.}
	\label{fig:SIMO_reconstruction}
\end{figure*}
\begin{figure*}[!t]
\setcounter{subfigure}{-2}
	\centering
	\subfigure{\label{fig:}\includegraphics[height=.2in]{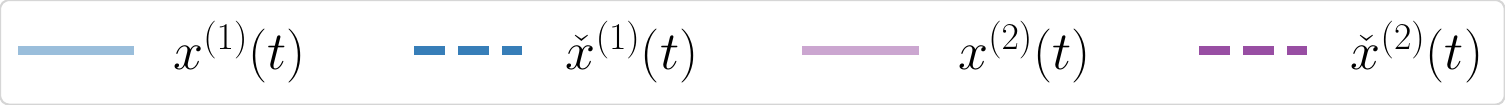}}
	\subfigure{\label{fig:}\includegraphics[height=.2in]{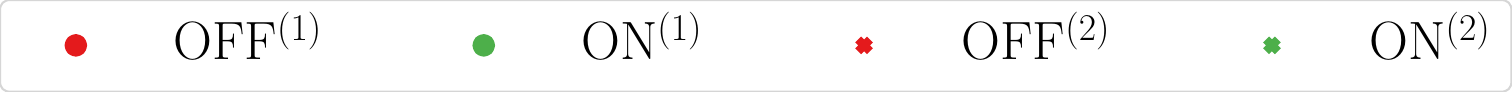}}
	\subfigure[Stream of Dirac impulses]{\label{fig:}\includegraphics[width=3.2in]{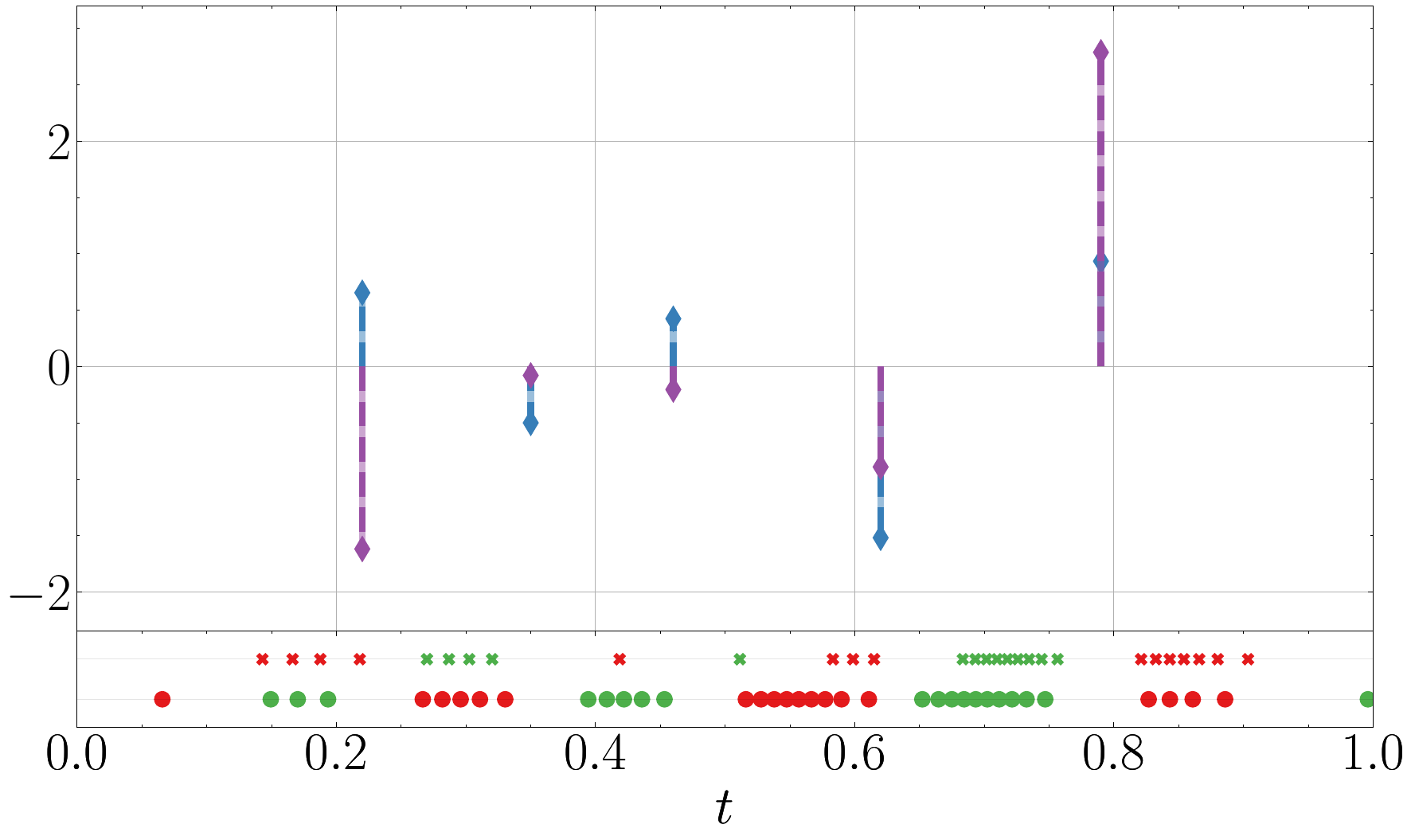}}
	\subfigure[Stream of cubic B-spline pulses]{\label{fig:}\includegraphics[width=3.2in]{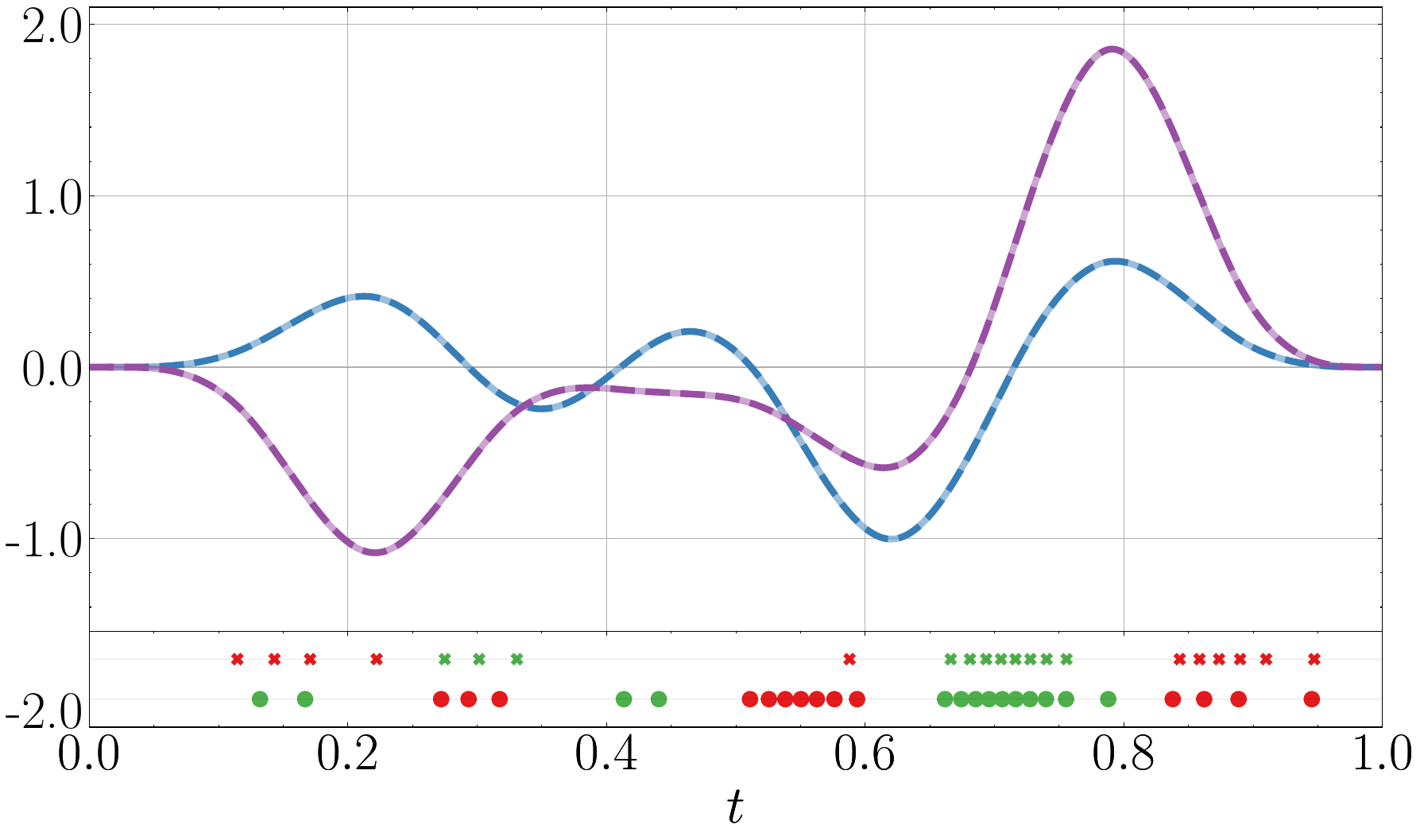}}
	\subfigure[A piecewise-constant signal]{\label{fig:}\includegraphics[width=3.2in]{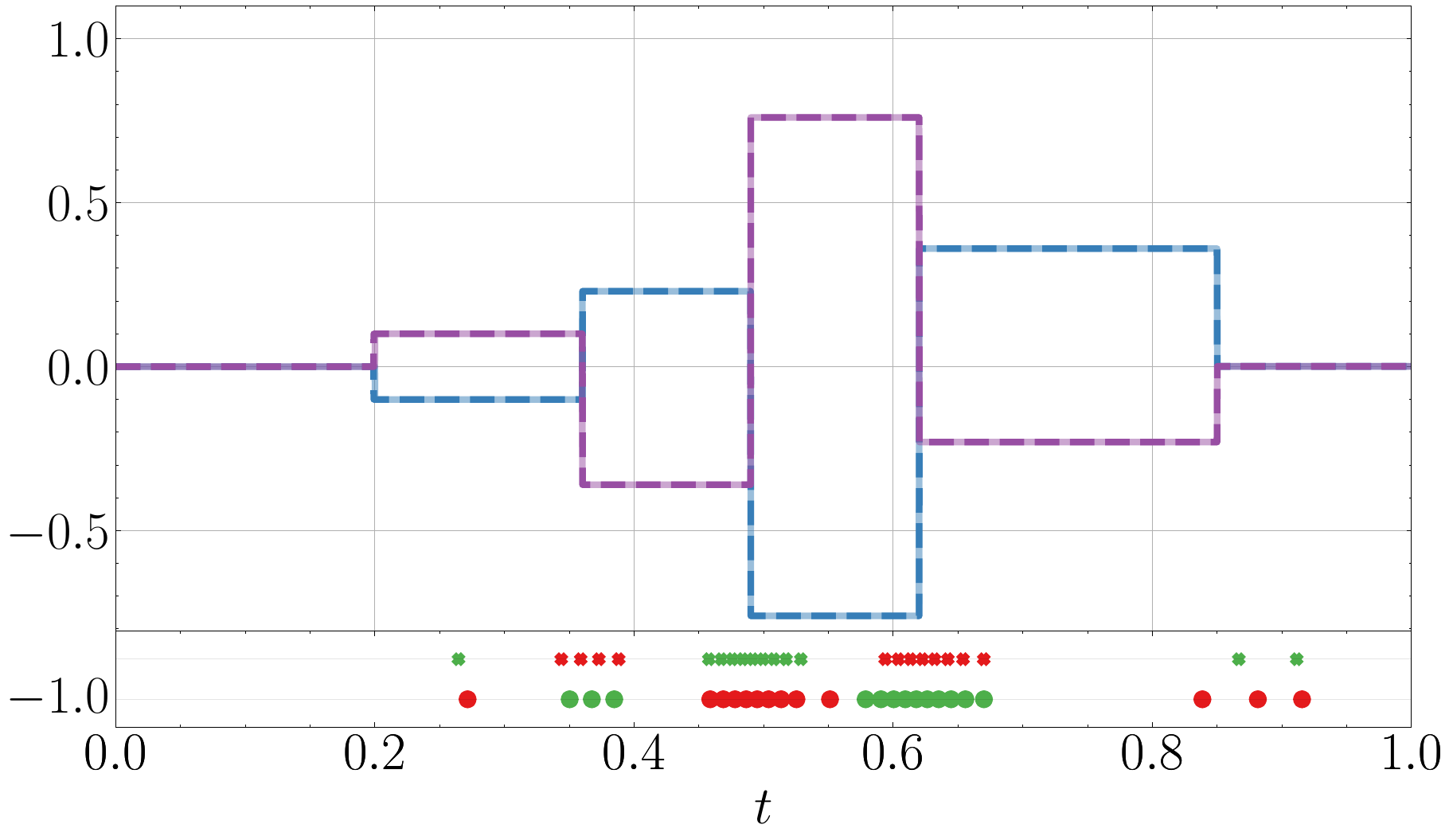}}
	\subfigure[A piecewise-linear signal]{\label{fig:}\includegraphics[width=3.2in]{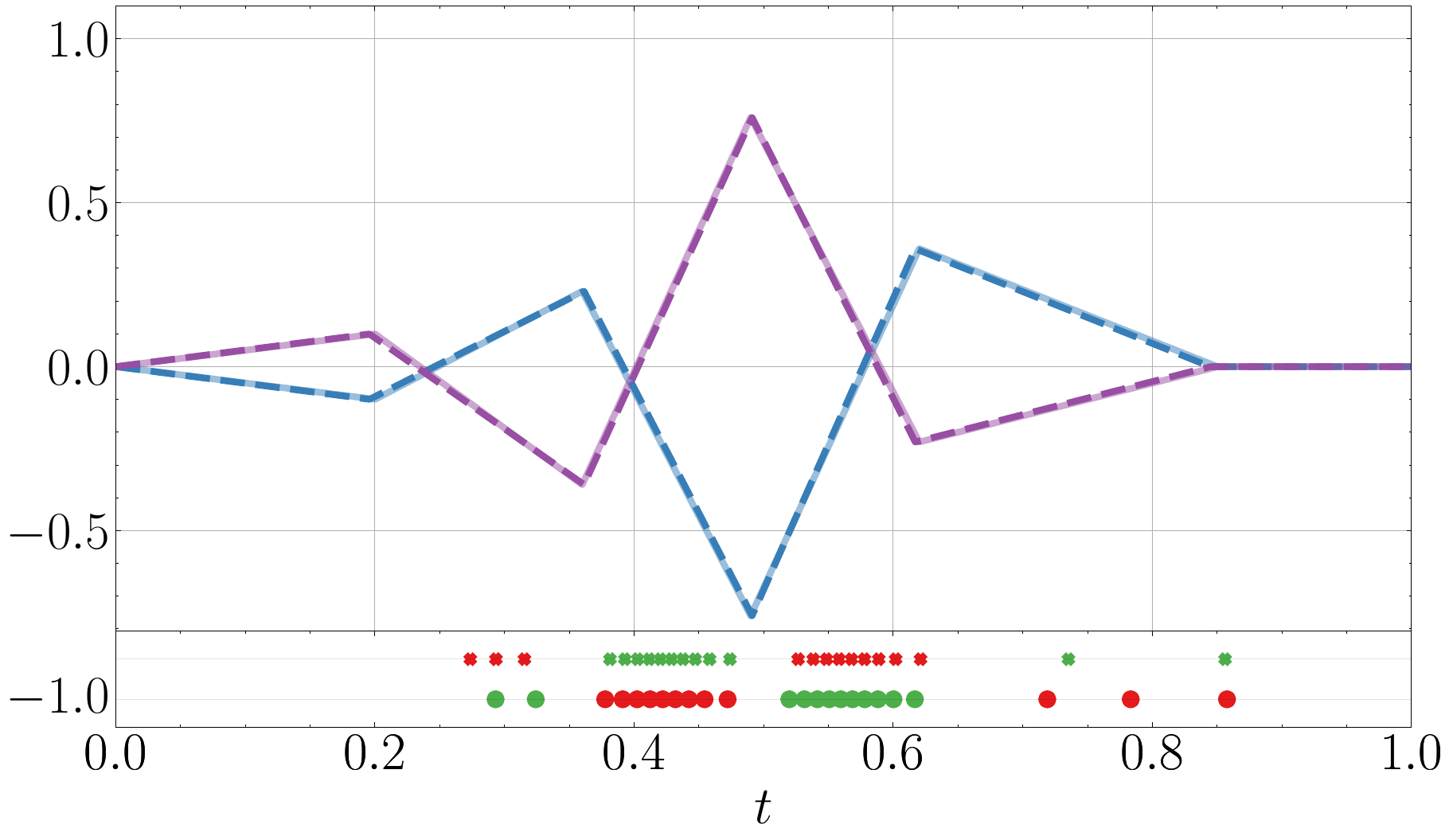}}
	\caption{MIMO neuromorphic sampling of FRI signals, using $Q=2$ channels, and the zeroth-order SMS sampling kernel $g(t)$ and perfect reconstruction using Prony's method for various examples. The FRI signals $\{x^{(i)}(t)\}$, the reconstructions $\{\check{x}^{(i)}(t)\}$; and the ON (green) and OFF (red) events $\{(t^{(i)}_1,p^{(i)}_1), (t^{(i)}_2,p^{(i)}_2), \ldots\}$ for channels $i=1,2$ are also shown.}
	\label{fig:MIMO_reconstruction}
\end{figure*}
\subsection{MIMO Neuromorphic Sampling}
\label{subsec:mimo_sampling}
Next, consider the MIMO configuration, where we have a vector input signal with entries $x^{(i)}\in L_2([0,T])$ given as
\begin{equation}\label{eq:mimo_signal_model}
	x^{(i)}(t) = \sum_{k=0}^{K-1} a^{(i)}_k\varphi(t-\tau_k),
\end{equation}
where $i\in\llbracket 1,Q\rrbracket$ denotes the channel index, and $Q$ is the total number of neuromorphic encoders. (cf. Figure~\ref{fig:mimo_encoding_schematic}). We allow the coefficients $\bd a^{(i)} = [a^{(i)}_0\;a^{(i)}_1\cdots a^{(i)}_{K-1}]^\TT \in \bb R^K$ of each entry to vary, whilst the pulse $\varphi$ and the support parameters $\bld \tau = [\tau_0\;\tau_1\;\cdots\;\tau_{K-1}]^\TT\in\bb R^K$ remain fixed across the channels, {\it i.e.}, the vector input signal has common support parameters. This is akin to the sparse common support FRI (SCS-FRI) signal model considered in \cite{hormati2011compressive}, which is encountered in practical applications such as pulsed Doppler radar \cite{rudresh2017radar}. Each of the entries in the vector input has a rate of innovation of $\displaystyle\frac{2K}{T}$, and can be perfectly recovered from its events using Algorithm~\ref{algo:sparse}. However, since the support parameters are identical, the annihilating filter is common across the channels and the common support parameters can be estimated using the \emph{block-annihilation} technique \cite{hormati2011compressive}.\\
\indent Let $f^{(i)}(t) = \left(x^{(i)}*g\right)(t)$ denote the filtered signal in the $i$\textsuperscript{th} channel, and $\sT_{C^{(i)}}\{f^{(i)}\} = \left\{\left(t^{(i)}_m,p^{(i)}_m\right)\right\}_{m=0}^{L^{(i)}}$ be the events generated by the neuromorphic encoder in the $i$\textsuperscript{th} channel with temporal contrast threshold $C_i$. Here, $L^{(i)}$ denotes the number of measurements obtained in the $i$\textsuperscript{th} channel. Using Lemma~\ref{lem:ttransform}, we obtain the amplitude samples of the input vector signal as $\bd f^{(i)} = \left[f^{(i)}\left(t^{(i)}_1\right)\; f^{(i)}\left(t^{(i)}_2\right) \cdots f^{(i)}\left(t^{(i)}_{L^{(i)}}\right)\right]^\TT$, which are linearly related to the Fourier coefficients of the FRI signal, {\it i.e.}, $\bd f^{(i)} = \bd G^{(i)}\hat{\bd x}^{(i)}, \; i\in\llbracket 1,Q\rrbracket$. Using Lemma~\ref{lem:ctemMatrix}, the linear systems are left-invertible when $L^{(i)}\geq 2K+1, \; i\in\llbracket 1,Q\rrbracket$. Once the Fourier coefficients are obtained, we employ the block-annihilation technique, {\it i.e.}, we have a $(K+1)$-tap filter $\bd h$ that satisfies
\[
	\left(\Gamma_K\hat{\bd x}^{(i)}\right)\bd h = \bd 0,\;\forall i\in\llbracket 1,Q\rrbracket.
\]
Consolidating the annihilation property across channels, and invoking the common support property results in the block-annihilation model:
\begin{equation}\label{eq:block_annihilation}
	\bld\Gamma\bd h = \begin{bmatrix}
	\Gamma_K\hat{\bd x}^{(1)} \\
	\Gamma_K\hat{\bd x}^{(2)} \\
	\vdots \\
	\Gamma_K\hat{\bd x}^{(Q)}
	\end{bmatrix}\bd h = \bd 0.
\end{equation}
It has been shown in \cite{hormati2011compressive} that block-annihilation provides a superior estimate of the support in the presence of measurement noise, compared with averaging the channel-wise estimates of the support. The coefficients $\bd a^{(i)}$ can be found using least-squares regression of $\hat{\bd x}^{(i)} = \bd S\bd V\bd a^{(i)}, \; i\in\llbracket 1,Q\rrbracket$.\\
\indent The block-annihilation technique can be readily extended to estimating the common knot parameters of nonuniform $\mathrm L$-splines as in Section~\ref{subsec:nonuniform_spline_reconstruction}, with an appropriate change in the definition of the matrix $\bd S$.\\
\indent In each channel, we require $L^{(i)}\geq 2K+1$ events for perfect reconstruction. We obtain the minimal number of events by setting the temporal contrast threshold of the encoder in each of the channels similar to Proposition~\ref{prop:sparse_bound}. The preceding discussion is encapsulated in the form of the following proposition.
\begin{proposition}[Perfect reconstruction of FRI signals from MIMO neuromorphic encoding]\label{prop:mimo_sparse_bound}
	Vector FRI signals with entries $x^{(i)}(t)$ as in Eq.~\eqref{eq:mimo_signal_model} can be perfectly recovered from the events $\sT_{C^{(i)}}\{f^{(i)}\}$, where $f^{(i)}(t) = (x^{(i)}*g)(t)$, where $g(t)$ satisfies the alias-cancellation conditions (cf. Eq.~\eqref{eq:alias_cancellation}) and the temporal contrast thresholds satisfy
	\[
		0 < C^{(i)} < \frac{f^{(i)}_{\max} - f^{(i)}_{\min}}{2K+1}, \; \forall i\in\llbracket 1,Q\rrbracket,
	\]
	where $\displaystyle f^{(i)}_{\max} = \max_{t\in [0,T]} f^{(i)}(t)$ and $\displaystyle f^{(i)}_{\min} = \min_{t\in [0,T]} f^{(i)}(t)$.
\end{proposition}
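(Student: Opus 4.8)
The plan is to reduce the MIMO case to the single-channel machinery of Section~\ref{subsec:pulse_stream_reconstruction}, applied channel-by-channel, and then to couple the channels only through the shared support parameters via block annihilation.

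First I would treat each channel $i\in\llbracket 1,Q\rrbracket$ separately. The filtered signal $f^{(i)}=(x^{(i)}*g)$ is continuous because, by the alias-cancellation condition~\eqref{eq:alias_cancellation} together with the derivation~\eqref{eq:filteredSignalDerivation}, it reduces to the finite trigonometric polynomial $\sum_{l=-K}^{K}\hat{x}^{(i)}_l e^{\jj\omega_0 l t}$ as in~\eqref{eq:trig_poly}. Since $0<C^{(i)}<(f^{(i)}_{\max}-f^{(i)}_{\min})/(2K+1)$, Lemma~\ref{lem:min_measurements} guarantees at least $L^{(i)}\geq 2K+1$ events on $[0,T]$. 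Lemma~\ref{lem:ttransform} turns these events (plus the initial value) into the amplitude samples $\bd f^{(i)}$, and the same Fourier bookkeeping as in Section~\ref{subsec:pulse_stream_reconstruction} yields the linear system $\bd f^{(i)}=\bd G^{(i)}\hat{\bd x}^{(i)}$ with $\bd G^{(i)}$ of the form~\eqref{eq:ctemMatrix}. By Lemma~\ref{lem:ctemMatrix}, $\bd G^{(i)}$ has full column rank once $L^{(i)}\geq 2K+1$, so $\hat{\bd x}^{(i)}$ is recovered exactly for every $i$. Note that, unlike the SIMO case, no distinctness of the $C^{(i)}$ is needed here, since the channels are inverted independently.

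Next I would recover the common support. Each $\hat{\bd x}^{(i)}$ has the SWCE form~\eqref{eq:swce_pulse} with the same nodes $\vartheta_k=e^{-\jj\omega_0\tau_k}$, so the annihilation identity~\eqref{eq:annihilation} shows that the degree-$K$ filter $\bd h$ with roots $\{\vartheta_k\}_{k=0}^{K-1}$ satisfies $(\Gamma_K\hat{\bd x}^{(i)})\bd h=\bd 0$ for all $i$, hence $\bld\Gamma\bd h=\bd 0$ for the stacked matrix~\eqref{eq:block_annihilation}. To conclude that this determines $\bd h$ up to scale, I would argue that $\mathrm{null}(\Gamma_K\hat{\bd x}^{(i)})$ is exactly the set of degree-$\leq K$ filters whose root set contains the active support $\{k:a^{(i)}_k\neq 0\}$, so that $\mathrm{null}(\bld\Gamma)=\bigcap_i\mathrm{null}(\Gamma_K\hat{\bd x}^{(i)})$ is one-dimensional as soon as $\bigcup_i\{k:a^{(i)}_k\neq 0\}=\{0,\ldots,K-1\}$; under the natural reading of the model each channel is genuinely $K$-sparse (all $a^{(i)}_k\neq 0$), so even a single $\Gamma_K\hat{\bd x}^{(i)}$ already has a one-dimensional nullspace and block annihilation only improves conditioning. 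Extracting the roots of $\bd h$ and setting $\tau_k=-\tfrac{T}{2\pi}\angle\vartheta_k$ then gives $\bld\tau$ exactly; the $\mathrm L$-spline variant is identical with $\hat{x}^{(i)}_l$ taken from~\eqref{eq:swce_spline}.

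Finally, with $\bld\tau$ in hand, solving $\hat{\bd x}^{(i)}=\bd S\bd V\bd a^{(i)}$ recovers each $\bd a^{(i)}$: $\bd S$ is a nonsingular diagonal matrix ($\bd S=\diag\{\hat\varphi(l\omega_0)\}_{l}$ in the pulse case, assuming $\hat\varphi(l\omega_0)\neq 0$ for $|l|\leq K$, or $\bd S=\diag\{(\jj\omega_0 l)^{-n-1}\}_{l}$ for $\mathrm L$-splines) and $\bd V$ is a Vandermonde matrix with distinct nodes, hence left-invertible, so $\bd a^{(i)}$ is unique. This gives perfect reconstruction of every $x^{(i)}(t)$. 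The only genuinely delicate step is the nullspace argument in the third paragraph — certifying that the stacked Toeplitz system $\bld\Gamma\bd h=\bd 0$ pins down the annihilating filter — which is where the common-support hypothesis (every node active in some channel) is invoked; everything else is a direct assembly of Lemmas~\ref{lem:min_measurements}, \ref{lem:ttransform} and~\ref{lem:ctemMatrix} together with the Vandermonde and diagonal inversions already used for the single-channel result.
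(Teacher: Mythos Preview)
Your proposal is correct and follows essentially the same route as the paper: apply the single-channel pipeline (Lemmas~\ref{lem:min_measurements}, \ref{lem:ttransform}, \ref{lem:ctemMatrix}) to each channel to recover $\hat{\bd x}^{(i)}$, then invoke block annihilation~\eqref{eq:block_annihilation} for the common support and finish with the $\bd S\bd V$ inversion per channel. The paper itself does not give a self-contained proof of Proposition~\ref{prop:mimo_sparse_bound}; it merely summarizes the preceding discussion, so your write-up is, if anything, more detailed --- in particular your explicit nullspace argument for why $\bld\Gamma\bd h=\bd 0$ pins down $\bd h$ up to scale goes beyond what the paper spells out.
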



\subsection{Experimental Results}
Consider SIMO encoding of the FRI signal $x(t)$ in Eq.~\eqref{eq:pulse_signal_model} with $Q=2$ channels in the SIMO configuration. Consider two choices of $\varphi$: a Dirac impulse, and $\varphi(t)=\beta^{(3)}(\frac{1}{10})$, which is a time-scaled cubic B-spline; and nonuniform $\mathrm D^j$-splines in Eq.~\eqref{eq:nonuniform_Lsplines}, for $j=1,2$. The shift parameters are drawn from a uniform distribution over the interval $[0,1]$, and the coefficients are drawn from the standard normal distribution $\cl N(0,1)$. We use the zeroth-order SMS kernel for sampling and encoding using two neuromorphic encoders with the parameters critically set according to Proposition~\ref{prop:simo_sparse_bound}. Figure~\ref{fig:SIMO_reconstruction} shows the input signal $x(t)$, and the reconstruction $\check{x}(t)$, along with the event time-instants and event polarities obtained across two channels $\{(t^{(i)}_1,p^{(i)}_1), (t^{(i)}_2,p^{(i)}_2),\ldots\}$, $i=1,2$. The recovered parameters are accurate up to numerical precision, indicating perfect reconstruction.\\
\indent Next, consider MIMO encoding of $Q=2$ FRI signals $x^{(i)}(t),\;i=1,2$, with the same rate of innovation, common support $\bld \tau$ drawn uniformly at random over the interval $[0,1]$, and the coefficients drawn from $\cl N(0,1)$ different across channels. The choices of $\varphi$ are the same as in the SIMO setting. We use the zeroth-order SMS kernel for sampling, using two neuromorphic encoders with parameters set critically according to Proposition~\ref{prop:mimo_sparse_bound}. Figure~\ref{fig:MIMO_reconstruction} shows the input signals $x^{(i)}(t)$, and the reconstructions $\check{x}^{(i)}(t)$, along with the event time-instants and event polarities obtained across two channels $\{(t^{(i)}_1,p^{(i)}_1), (t^{(i)}_2,p^{(i)}_2),\ldots\}$, $i=1,2$. The recovered parameters for each signal are accurate up to numerical precision, indicating perfect reconstruction.



 
\section{Conclusions}
We introduced the novel paradigm of neuromorphic sampling of FRI signals, thereby connecting sparse signals to sparse sampling. The sampling is inherently opportunistic, {\it i.e.}, the events are recorded only when there is a significant change in the signal. Using Fourier-domain analysis, we showed that perfect signal reconstruction is possible using parameter estimation, when the measurements are of the order of the rate of innovation of the signal. The estimation of the shift parameters is performed using Prony's method, and the estimation of the coefficients is performed using linear least-squares regression. We provided sufficient conditions on the temporal contrast threshold of the neuromorphic encoder to ensure perfect signal reconstruction. Further, the framework is readily extendable to multichannel sampling in the SIMO and MIMO configurations. In the MIMO configuration, the support parameters can be jointly estimated using the block-annihilation technique, and, in the SIMO configuration, both the coefficients and shift parameters can be jointly estimated to reduce the sampling requirement in each channel as compared to the single channel case. We verified the claims and demonstrated perfect signal reconstruction using numerical experiments.\\
\indent Immediate directions in which the work proposed in this paper can be extended are developing applications to imaging modalities such as radar, sonar, and ultrasound. Furthermore, in practical applications, one must consider the effect of noise and model mismatch, which are fertile directions for further research in this area.


\appendices


\bibliographystyle{ieeetr}
\bibliography{references.bib}

\end{document}